\pgfplotsset{compat=1.18}
\tikzset{snake it/.style={decorate, decoration=snake}}
\definecolor{DarkRed}{rgb}{0.65,0,0}
\definecolor{Red}{rgb}{1,0,0}
\newcommand{\eps}{\varepsilon}
\newcommand{\dist}{\text{dist}}
\newcommand{\cee}{\mathcal{C}}
\newcommand{\forward}{\texttt{forward}}
\newcommand{\backward}{\texttt{backward}}
\newcommand{\Althofer}{Alth\"{o}fer}
\newcommand{\mst}{\texttt{MST}}
\title{An Alternate Proof of Near-Optimal Light Spanners}
\begin{document}

\maketitle

\begin{abstract}
In 2016, a breakthrough result of Chechik and Wulff-Nilsen [SODA '16] established that every $n$-node graph $G$ has a $(1+\eps)(2k-1)$-spanner of lightness $O_{\eps}(n^{1/k})$, and recent followup work by Le and Solomon [STOC '23] generalized the proof strategy and improved the dependence on $\eps$.
We give a new proof of this result, with the improved $\eps$-dependence.
Our proof is a direct analysis of the often-studied greedy spanner, and can be viewed as an extension of the folklore Moore bounds used to analyze spanner sparsity.
\end{abstract}

\section{Introduction}

We study \emph{spanners}, which are a graph-theoretic primitive with applications in graph algorithms, network design, and sketching \cite{ABSHJKS20}.
\begin{definition}[Spanners \cite{PU89jacm,PU89sicomp}]
Given a graph $G$, a $t$-spanner is an edge-subgraph $H$ that satisfies
$\dist_H(u, v) \le t \cdot \dist_G(u, v)$ for all vertices $u, v$.
\end{definition}

The usual goal is to design a spanner with a favorable tradeoff between its stretch $t$ and its \emph{size}.
There are two different ways that spanner size is commonly measured.
One is by the sparsity (number of edges) of the spanner.
The stretch/sparsity tradeoff has long been understood, thanks to the following classic theorem by \Althofer{} et al.~\cite{ADDJS93}
Here and throughout the paper, all graphs are undirected and may have arbitrary positive edge weights.
\begin{theorem} [\cite{ADDJS93}] \label{thm:sotasparse}
For all positive integers $k, n$, every $n$-node graph $G$ has a $(2k-1)$-spanner $H$ on $|E(H)| \le O(n^{1+1/k})$ edges.
This tradeoff is best possible, assuming the girth conjecture \cite{girth}.
\end{theorem}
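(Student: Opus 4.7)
The plan is to construct $H$ via the greedy spanner algorithm and then bound its size using a classical high-girth argument. Sort the edges of $G$ in non-decreasing order of weight. Starting from $H = \emptyset$, process each edge $e = (u,v)$ in turn, adding it to $H$ if and only if the current distance $\dist_H(u,v)$ strictly exceeds $(2k-1) \cdot w(e)$. A short chaining argument (apply the per-edge guarantee along a shortest path in $G$, bounding $\dist_H(u,v)$ by the sum of spanner-distances across the path edges) then shows that $H$ is a $(2k-1)$-spanner of $G$.

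The main structural claim I will prove is that $H$ contains no short cycles, in the following weight-sensitive sense: for every cycle $C \subseteq H$, the heaviest edge of $C$ has weight strictly greater than the sum of the remaining edge weights divided by $2k-1$. Suppose a violating cycle $C$ exists and let $e = (u,v)$ be its heaviest edge. Because edges were processed in non-decreasing weight order, $e$ is the last edge of $C$ added to $H$; at that moment, the other edges of $C$ already formed a $u$-$v$ path in $H$ whose total weight is at most $(2k-1)w(e)$, contradicting the greedy rule that caused $e$ to be added. In particular, the underlying unweighted graph of $H$ has girth at least $2k+1$.

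Next I convert this girth bound into an edge bound via the folklore Moore-bound computation. A graph of average degree $\bar d$ contains a subgraph of minimum degree at least $\bar d / 2$, and passing to a subgraph only increases girth. Running a BFS of depth $k$ from any vertex of such a subgraph, the girth condition forces the first $k$ BFS layers to be pairwise disjoint and tree-like, yielding $n \ge 1 + \sum_{i=0}^{k-1}\delta(\delta-1)^i \ge (\delta-1)^k$ for the subgraph's minimum degree $\delta$. Rearranging gives $\bar d / 2 \le n^{1/k} + 1$ and hence $|E(H)| \le O(n^{1+1/k})$.

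The matching lower bound is immediate from the girth conjecture: it supplies unit-weight $n$-vertex graphs with girth at least $2k+1$ and $\Omega(n^{1+1/k})$ edges, and in any such graph, deleting any edge $e=(u,v)$ forces $\dist(u,v) \ge 2k > 2k-1$, so no $(2k-1)$-spanner can omit any edge. The only real subtlety in this plan is keeping the girth argument weight-sensitive so that it handles arbitrary positive edge weights rather than just unit weights; this comes essentially for free from processing edges in non-decreasing weight order, which ensures the ``other'' edges of a purported short cycle are always no heavier than the triggering edge $e$. The remaining steps are routine.
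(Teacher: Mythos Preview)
Your argument is correct: greedy construction, girth $>2k$, then the Moore bound, with the girth-conjecture lower bound handled as stated. The difference from the paper lies entirely in how the Moore bound itself is proved. You use the standard BFS-tree argument (pass to a subgraph of minimum degree $\delta \ge \bar d/2$, grow a tree of depth $k$ that stays simple because of the girth condition, and read off $n \ge (\delta-1)^k$). The paper instead proves the Moore bound by a counting argument over edge-simple $k$-paths: a \emph{dispersion lemma} (high girth forbids two such paths with the same endpoints, so there are at most $O(n^2)$ of them) paired with a \emph{counting lemma} (random edge-subsampling shows there are at least $n\cdot\Omega(d)^k$ of them), and then the two estimates are compared. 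Your BFS proof is shorter and more elementary for this one theorem; the paper's dispersion/counting template is chosen deliberately because it is the scaffold that the entire lightness proof (Sections~\ref{sec:warmup} and~\ref{sec:mainproof}) is built on, with edge-safe and bucket-safe paths replacing single edges. As a side note, your weight-sensitive cycle argument is exactly the weighted-girth observation the paper later formalizes as Lemma~\ref{lem:wtgirth}, and you supply the girth-conjecture lower bound explicitly, which the paper leaves implicit.
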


The other popular way to measure spanner size is by the \emph{total edge weight} in the spanner $w(H)$.
In general, the total edge weight required for a $t$-spanner might be unbounded, say by scaling up the edge weights of the input graph.
So in order to prove stretch/weight tradeoffs, we typically normalize the spanner weight by the weight of a minimum spanning tree (MST) of the original graph.
This size measure is called \emph{lightness}:
\begin{definition} [Spanner Lightness]
The lightness of a subgraph $H$ of a graph $G$ is the quantity
$$\ell(H \mid G) := \frac{w(H)}{w(\mst(G))}$$
where $\mst(G)$ is a minimum spanning tree of $G$.\footnote{Throughout this paper, we will assume where convenient that graphs are connected, so that $\mst(G)$ exists.  Otherwise, a minimum spanning forest may be used.}
For brevity we also write $\ell(H) := \ell(H \mid H)$.
\end{definition}

There has been a long line of work on the stretch/lightness tradeoff; see Table \ref{tbl:priorwork} for the progression of results.
A key result in this sequence, by Chechik and Wulff-Nilsen \cite{CW18}, established the following analog of Theorem \ref{thm:sotasparse}:

\begin{theorem} [\cite{CW18}] \label{thm:sotalight}
For all $\eps>0$ and positive integers $k, n$, every $n$-node graph $G$ has a $(1+\eps)(2k-1)$-spanner $H$ of lightness
$\ell(H \mid G) = O_{\eps}\left( n^{1/k} \right).$
\end{theorem}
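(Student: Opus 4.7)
The plan is to analyze the classical greedy $t$-spanner $H$ directly, with $t := (1+\eps)(2k-1)$. The greedy algorithm processes edges of $G$ in non-decreasing weight order and adds $(u,v)$ to $H$ iff $\dist_H(u,v) > t \cdot w(u,v)$. Its defining property is that each $e \in H$ closes a cycle in $H$ whose other edges are all lighter than $e$ and whose total weight exceeds $t \cdot w(e)$. This is the weighted extension of the ``girth $> 2k$'' condition that drives the sparsity proof of Theorem~\ref{thm:sotasparse}, and the goal is to leverage it more carefully to bound the total weight rather than just the edge count.

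Next, I would partition $E(H)$ into geometric weight classes $C_0, C_1, \ldots$, where $C_i$ contains the edges of weight in $[(1+\delta)^i, (1+\delta)^{i+1})$ for a parameter $\delta = \Theta(\eps/k)$. Since $w(H) \le (1+\delta) \sum_i (1+\delta)^i \, |C_i|$, it suffices to charge each class separately against a disjoint portion of $w(\mst(G))$. For a fixed $i$, I would contract in $G$ all MST edges of weight below $\delta \cdot (1+\delta)^i$, leaving $n_i$ supernodes, and observe that at least $n_i - 1$ surviving MST edges each have weight $\ge \delta(1+\delta)^i$, contributing $\mathrm{MST}_i := (n_i-1)\, \delta(1+\delta)^i$ to the overall MST budget.

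The key technical step is to establish a \emph{weighted Moore bound} on the contracted multigraph restricted to $C_i$: each certifying cycle for $e \in C_i$, after contraction, still uses strictly more than $2k$ supernode-to-supernode hops, because in $C_i$'s weight regime each absorbed MST detour contributes at most a $\delta$-fraction of $w(e)$, and there are only $O(k)$ detours along the cycle, so the total detour weight fits inside the $\eps(2k-1)w(e)$ slack of the stretch. The classical Moore bound then yields $|C_i| \le O(n_i^{1+1/k})$, and rearranging gives $(1+\delta)^i\, |C_i| = O(n_i^{1/k}) \cdot \mathrm{MST}_i / \delta$. Summing over $i$ with $n_i \le n$ produces $w(H) \le O_\eps(n^{1/k}) \cdot w(\mst(G))$, as desired.

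The main obstacle is making the contraction honest in two senses. First, the $O(k)$ MST detours along a single certifying cycle must add up to less than $\eps(2k-1)\, w(e)$ rather than compounding, which constrains $\delta$ from above. Second, each MST edge must be charged to a bounded number of weight classes (ideally one), since otherwise the final sum picks up an extra $\log$-factor blowup. Calibrating $\delta$ so that both constraints hold simultaneously, and keeping the book-keeping tight enough to recover the stated $O_\eps(n^{1/k})$, is where I expect the ``hiker lemma'' abstraction advertised in the abstract to do the real work, by packaging the charging of MST detours against stretch slack into a clean, reusable walking argument along certifying cycles.
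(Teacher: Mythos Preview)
Your plan is essentially the bucketing-plus-contraction argument of \cite{CDNS95} (refined in \cite{ENS15}), which the paper explicitly flags as giving only $O_{\eps}(k\, n^{1/k})$ lightness. The tension you yourself identify in the last paragraph is real and fatal for reaching $O_{\eps}(n^{1/k})$: to keep girth $>2k$ after contraction you must take $\delta = \Theta(\eps/k)$, but then the charging of each class against $\mathrm{MST}_i$ carries a $1/\delta = \Theta(k/\eps)$ factor that does not go away. There is no calibration of $\delta$ that simultaneously satisfies both constraints; this is precisely the barrier that \cite{CW18,LS23} broke via hierarchical clustering and that this paper breaks by a different route. Your speculation that the hiker lemma ``packages the charging of MST detours against stretch slack'' is not what happens.

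The paper's actual argument does \emph{not} bucket-and-contract at all. It first reduces (Section~\ref{sec:priorwork}) to graphs $H$ with weighted girth $>(1+O(\eps))2k$ and a unit-weight Hamiltonian spanning cycle, so that lightness becomes $w(H)/n$. It then runs a Moore-bound-style counting argument, but over a carefully engineered family of paths called \emph{bucket-monotone safe $k$-paths}: concatenations of subpaths, one per weight bucket $B_i = [2^i,2^{i+1})$ in increasing order, where the $i$th subpath uses only non-cycle edges from $B_i$ together with $O(\eps k 2^i)$ spanning-cycle edges. The dispersion lemma (at most one such path per endpoint pair) holds because the spanning-cycle budgets telescope across buckets, costing only $+O(\eps k)$ in normalized cycle weight rather than $+O(\eps k^2)$. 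The hiker lemma is the \emph{weak counting lemma}: place a hiker at each node and route them along bucket-safe paths to show that if $w(H\setminus \cee) \ge \Omega(\eps^{-1} n)$ then a bucket-monotone safe $k$-path exists. Bootstrapping to a full counting lemma and comparing against the $O(n^2)$ dispersion bound yields $w(H)/n = O(\eps^{-1} n^{1/k})$ directly, with no per-bucket contraction and no MST charging.
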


The stretch/lightness tradeoff in Theorem \ref{thm:sotalight} is best possible, assuming the girth conjecture \cite{girth}, and up to its dependence on $\eps$.
The theorem is proved using an ingenious framework for hierarchical graph clustering.
A followup paper by Le and Solomon \cite{LS23} refined and vastly generalized the scope of this hierarchical clustering method, improving the hidden $\eps$-dependence in the lightness bound of Theorem \ref{thm:sotalight}, and also gaining broad applications to the study of light spanners in some more specific graph classes.

\begin{table}[t]
\begin{center}
\begin{tabular}{llcc}
\textbf{Stretch} & \textbf{Lightness} & \textbf{Analyzes Greedy Spanner?} & \textbf{Citation} \\ \hline
$2k-1$ & $O(n/k)$ & \checkmark{} & \cite{ADDJS93} \\
$(1+\eps) \cdot (2k-1)$ & $O_{\eps}\left(k \cdot n^{1/k}\right)^*$ & \checkmark{} & \cite{CDNS95} \\
$(1+\eps) \cdot (2k-1)$ & $O_{\eps}\left(\frac{k}{\log k} \cdot n^{1/k} \right)^*$ & \checkmark{} & \cite{ENS15} \\
$(1+\eps) \cdot (2k-1)$ & $O\left( \eps^{-(3+2/k)} n^{1/k} \right)$ & & \cite{CW18} \\
$(1+\eps) \cdot (2k-1)$ & $O\left( \eps^{-1} n^{1/k} \right)^*$ & & \cite{LS23} \\
$(1+\eps) \cdot (2k-1)$ & $O\left( \eps^{-1} n^{1/k} \right)$ & \checkmark{} & this paper \\ \hline
$(1+\eps) \cdot (2k-1)$, $\eps = n^{-\frac{1}{2k-1}}$ only & $\Omega_k\left( \eps^{-1/k} n^{1/k} \right)$ & n/a & \cite{BF24}
\end{tabular}
\end{center}
\caption{\label{tbl:priorwork} Work on the stretch/lightness tradeoff for spanners.  Lightness bounds marked with $^*$ have an \emph{unconditionally} optimal dependence on $n$, i.e., they automatically improve if the girth conjecture fails.}
\end{table}

Meanwhile, perhaps the most popular spanner construction in the literature is the \emph{greedy algorithm} (see Algorithm \ref{alg:greedy}).
The greedy algorithm is ubiquitous because it is simple, easy to prove correct, and its stretch/sparsity and stretch/lightness tradeoffs are both known to be \emph{existentially optimal} \cite{ADDJS93, FS20}.
That is, the stretch/lightness tradeoff achieved by any algorithm -- including the hierarchical clustering method \cite{CW18, LS23} -- is automatically achieved by the greedy algorithm as well.
This has motivated interest in spanner size bounds that are proved by analyzing the greedy algorithm directly.
In the context of spanner \emph{sparsity}, there is indeed a simple proof of Theorem \ref{thm:sotasparse} that works by directly analyzing the greedy spanner.
This proof is called the Moore bounds and it is considered folklore; we recap the proof in Section \ref{sec:moore}.
In the context of spanner \emph{lightness}, there are some arguments that directly analyze the greedy spanner \cite{ADDJS93, CDNS95, ENS15}, but they all show suboptimal lightness bounds that do not quite match the one in Theorem \ref{thm:sotalight}.

The contribution of this paper is a new proof of Theorem \ref{thm:sotalight}, with the improved $\eps$-dependence from \cite{LS23}, which directly analyzes the greedy spanner (or, more accurately, which directly analyzes graphs of high weighted girth \cite{ENS15}; see Section \ref{sec:wtgirth}).
Our proof follows the template of the Moore bounds, and so it may also have an advantage in conceptual familiarity to a reader who is primarily comfortable with the literature on spanner sparsity.

\subsection{Other Related Work}

In subsequent work to the current paper, the author and Flics \cite{BF24} constructed a lower bound graph on which any $(1+\eps)(2k-1)$-spanner has lightness $\Omega_k(\eps^{-1/k} n^{1/k})$, for some particular choice of $\eps$ (as usual, this result is conditional on the girth conjecture).
This implies that the dependence of $\eps^{-1}$ in the upper bound of \cite{LS23} and the current paper cannot be \emph{completely} removed, although it remains an interesting open problem to tighten the $\eps$-dependence on either the upper or the lower bounds side.

The greedy algorithm runs in time $O(mn^{1+1/k})$.
There has been a line of work on alternate algorithms that achieve similar lightness guarantees, but which run faster.
Elkin and Solomon \cite{ES16} constructed spanners in near-linear time whose lightness is suboptimal by a factor of $k$.
Alstrup, Dahlgaard, Filtser, St\"{o}ckel, and Wulff-Nilsen \cite{ADFSW22} constructed spanners with near-optimal lightness (up to dependence on $\eps$), in time approximately $O(n^{2 + 1/k})$.
Finally, the recent work of Le and Solomon \cite{LS23} tightened the $\eps$-dependence, and achieved near-linear runtime.

Light spanners have also been studied intensively in various specific graph classes.
This most notably includes Euclidean spaces (see e.g.\ \cite{FS20, LS23, NS07, ES15} and references within), but other domains have also received attention, such as doubling metrics \cite{KLMS22, LT24, BLW19, CLNS15, CLN15, CG09, GR08}, minor-free graphs \cite{BLW17}, and polygonal domains \cite{BKKLLPT24}.

\subsection{Organization}

A lot of this paper is optional ``warmup'' content rather than the main proof, and the enterprising reader can get the full proof by reading Sections \ref{sec:priorwork} and \ref{sec:mainproof} only.
The surrounding warmup proofs, discussions, and puzzles are recommended to help build intuition.

\section{Warmup 1: The Moore Bounds for Spanner Sparsity \label{sec:moore}}

We will begin by recapping the proof of the stretch/sparsity tradeoff given in Theorem \ref{thm:sotasparse}.
This theorem analyzes the following simple \emph{greedy algorithm}:
\begin{center}
\begin{algorithm}[h]
\textbf{Input:} Graph $G = (V, E, w)$, stretch $t$\;~\\

Let $H = (V, \emptyset, w)$ be the initially-empty spanner\;
\ForEach{$(u, v) \in E$ in order of nondecreasing weight}{
    \If{$\dist_H(u, v) > t \cdot w(u, v)$}{
        add $(u, v)$ to $H$\;
    }
}
\textbf{return} $H$\;

\caption{\label{alg:greedy} The Greedy Spanner Algorithm \cite{ADDJS93}}
\end{algorithm}
\end{center}
We first observe that the output spanner $H$ of the greedy algorithm with stretch parameter $2k-1$ has girth (shortest cycle length) $>2k$ \cite{ADDJS93}.
(We omit this proof, as it is standard, but we note that it is implied by Lemma \ref{lem:wtgirth} to follow.)
Theorem \ref{thm:sotasparse} then follows from the \emph{Moore bounds}, which limit the maximum possible number of edges in a high-girth graph:

\begin{theorem} [Moore Bounds]
For any positive integers $n, k$, every $n$-node graph $H$ with girth $>2k$ has $O(n^{1+1/k})$ edges.
\end{theorem}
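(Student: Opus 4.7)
The plan is to prove the Moore bound by a standard BFS-based counting argument, which the later sections will then generalize to weighted girth. I would structure the proof in two steps: first pass from $H$ to a high-minimum-degree subgraph, and then run BFS from any vertex of that subgraph to exploit the girth condition.

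First I would reduce to the minimum-degree setting. Let $m := |E(H)|$, so the average degree of $H$ is $2m/n$. Using the folklore fact that every graph contains a subgraph $H'$ of minimum degree at least half its average degree, I pass to such an $H'$ of minimum degree $d \ge m/n$. Removing vertices and edges can only shrink cycles away, so $H'$ still has girth $>2k$. It therefore suffices to prove $d \le O(n^{1/k})$, since rearranging $m/n \le d \le O(n^{1/k})$ yields $m \le O(n^{1+1/k})$.

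Next I would do a BFS from an arbitrary vertex $v$ of $H'$, organizing vertices into levels $L_0,L_1,L_2,\ldots$ by distance from $v$. The key structural claim is that for $i\le k-1$, the edges between $L_i$ and $L_{i+1}$ form a forest rooted appropriately: if any vertex in $L_{i+1}$ had two neighbors in $L_{\le i}$, or any two vertices in $L_i$ shared a neighbor in $L_{i+1}$ with $i+1 \le k$, then we could close a cycle of length at most $2(i+1) \le 2k$, contradicting the girth hypothesis. Consequently, each vertex $u \in L_i$ for $i \le k-1$ contributes exactly $\deg_{H'}(u) - 1 \ge d-1$ previously unseen children in $L_{i+1}$, and the root contributes $d$ children in $L_1$. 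Multiplying through yields $|L_k| \ge d(d-1)^{k-1}$, so $n \ge 1 + d + d(d-1) + \cdots + d(d-1)^{k-1} \ge (d-1)^k$, giving $d \le (d-1)+1 \le n^{1/k}+1$.

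The main obstacle, such as it is, is articulating cleanly why the BFS layers up to depth $k$ really do expand by a factor of roughly $d-1$ each time; the high-girth hypothesis is doing all the work here, and the rest is arithmetic. I expect the argument to go through verbatim up to $\log$-free constants, matching $m \le O(n^{1+1/k})$ exactly. This proof template, in particular the counting of BFS levels using forbidden short cycles, is the one I would then want to extend (weighting each level by an appropriate scale) in order to follow the paper's stated strategy for the weighted-girth lightness analysis.
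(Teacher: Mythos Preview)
Your proof of the Moore bound is correct, but it is not the argument the paper gives. You use the classical BFS/tree-growth proof: pass to a subgraph of minimum degree $d \ge m/n$, grow a BFS tree of depth $k$, and use girth $>2k$ to certify that all $\ge d(d-1)^{k-1}$ leaves at level $k$ are distinct. The paper instead runs a two-sided counting argument over \emph{edge-simple $k$-paths}: a \emph{dispersion lemma} says that no two such paths can share both endpoints (else their union would contain a cycle of length $\le 2k$), capping the count at $O(n^2)$; a sequence of \emph{counting lemmas} (weak $\to$ medium $\to$ full, the last proved by random edge-subsampling with probability $\Theta(1/d)$) shows there are at least $n\cdot\Omega(d)^k$ such paths. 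Comparing the two bounds yields $d \le O(n^{1/k})$.

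Both proofs are folklore, but the distinction matters for your last paragraph. You anticipate that the paper's lightness analysis will extend the BFS layering, ``weighting each level by an appropriate scale.'' It does not. The paper's entire weighted-girth argument is built around the dispersion/counting template: it defines weighted analogs of edge-simple $k$-paths (edge-safe, then bucket-safe, then bucket-monotone safe $k$-paths), proves a weighted dispersion lemma, and proves counting lemmas via a ``hiker'' argument plus the same random-subsampling trick. The BFS tree never appears. So while your proof of this particular statement is fine, it is not the scaffold the rest of the paper stands on, and following it would lead you away from the actual generalization.
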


The proof of the Moore bounds is a counting argument over the \emph{edge-simple $k$-paths} of $H$.
Recall that an edge-simple path is one that does not repeat edges.
One part of this counting argument is the following \emph{dispersion lemma}, implying that these paths are ``dispersed'' around the graph rather than having several of them concentrated on a pair of endpoint nodes.

\begin{lemma} [Unweighted Dispersion Lemma]
$H$ may not have two distinct edge-simple $k$-path with the same endpoints $s, t$.
\end{lemma}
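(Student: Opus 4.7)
The plan is to prove this by contradiction, exploiting the girth hypothesis on $H$. Suppose for contradiction that $P_1$ and $P_2$ are two distinct edge-simple $k$-paths in $H$, both with endpoints $s$ and $t$. The goal is to construct from $P_1, P_2$ a cycle of length at most $2k$ in $H$, contradicting the standing assumption that $H$ has girth $>2k$.

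The key construction is the symmetric difference $E(P_1) \triangle E(P_2)$, viewed as an edge-subgraph $F$ of $H$. First I would verify that $F$ is nonempty: since $P_1$ and $P_2$ are distinct edge-simple paths, at least one edge lies in exactly one of them, and that edge survives in $F$. Next I would check that every vertex of $F$ has even degree. For an internal vertex $v$ of $P_i$, the path contributes $2$ to the degree; for the endpoints $s, t$, each path contributes $1$, and these two contributions combine to give an even total. Passing to the symmetric difference only subtracts twice the multiplicity of shared edges at $v$, so parities are preserved: every vertex in $F$ has even degree.

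A nonempty graph in which every vertex has even degree must contain a cycle (for instance, start at any vertex incident to an edge of $F$ and repeatedly walk out along unused edges; by parity this never gets stuck until a previously visited vertex is revisited). This cycle $C$ is a cycle of $H$, and it uses only edges from $F \subseteq E(P_1) \cup E(P_2)$, so $|E(C)| \le |E(P_1)| + |E(P_2)| = 2k$. That contradicts the girth lower bound $\gir(H) > 2k$, completing the proof.

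The main conceptual step (and the only mildly delicate one) is the parity check at the endpoints $s$ and $t$: one has to notice that although each path individually contributes odd degree $1$ at $s$ and $t$, the \emph{sum} of the two contributions is even, which is exactly what keeps the symmetric-difference argument intact despite $P_1$ and $P_2$ not being closed walks. Everything else is routine.
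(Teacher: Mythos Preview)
Your argument is essentially the paper's: both find a cycle inside $P_1 \cup P_2$ with at most $2k$ edges and invoke the girth bound. The paper does this in one line by observing that $P_1 \cup P_2$ supports two distinct $s$--$t$ walks and hence cannot be a tree; your symmetric-difference/parity route reaches the same conclusion.

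One step in your write-up is not quite justified as stated. You assert that ``since $P_1$ and $P_2$ are distinct edge-simple paths, at least one edge lies in exactly one of them,'' i.e.\ $E(P_1) \triangle E(P_2) \ne \emptyset$. But distinct edge-simple \emph{walks} can share an edge set --- for instance, two different Eulerian trails of the same subgraph --- so $P_1 \ne P_2$ does not by itself force the symmetric difference to be nonempty. The fix is easy and does not change your approach: if $E(P_1) = E(P_2)$, then that common $k$-edge subgraph carries two distinct edge-simple $s$--$t$ trails, hence is not a tree, hence already contains a cycle of length at most $k < 2k$. Either handle this case separately, or replace the symmetric-difference step with the paper's one-line ``not a tree'' observation, which covers both cases uniformly.
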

\begin{proof}
Suppose for contradiction that $\pi_a, \pi_b$ are two distinct $s \leadsto t$ edge-simple $k$-paths in $H$.
The subgraph $\pi_a \cup \pi_b$ is not a tree, since it contains two distinct $s \leadsto t$ paths, and so it contains a cycle $C$.
This cycle must have $|C| \le |\pi_a| + |\pi_b| = 2k$ edges, which contradicts that $H$ has girth $>2k$.
\end{proof}

The dispersion lemma implies an upper bound of $O(n^2)$ on the number of edge-simple $k$-paths in $H$.
The other part of the argument is a \emph{counting lemma}, which gives a corresponding lower bound.
Only the last ``full'' counting lemma in the following sequence is used, but the proof strategy is to bootstrap it by starting with weaker versions.

\begin{lemma} [Unweighted Weak Counting Lemma]
If $|E(H)| \ge n$, then $H$ contains an edge-simple $k$-path.
\end{lemma}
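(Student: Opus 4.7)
The plan is to leverage the hypothesis that $H$ has girth $>2k$ (inherited from the surrounding discussion, even though it is not restated in the lemma) together with the edge count $|E(H)|\ge n$ to exhibit a long cycle, and then read off a $k$-path from it.

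First I would observe that any $n$-node graph with at least $n$ edges cannot be a forest, since a forest on $n$ vertices has at most $n-1$ edges. Hence $H$ contains at least one cycle $C$. Next, invoking the girth hypothesis, every cycle in $H$ has length strictly greater than $2k$, so in particular $|C|\ge 2k+1>k$. Finally, take any contiguous run of $k$ edges along $C$: because $C$ is itself vertex-simple, this segment is a path of exactly $k$ edges, hence an edge-simple $k$-path in $H$.

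There is essentially no obstacle here — the argument is just packaging two standard facts (``enough edges forces a cycle'' and ``girth forces cycles to be long'') — but it is worth flagging that the girth hypothesis is doing real work even though it is not called out in the lemma statement. Without it, the claim would fail: a disjoint union of triangles satisfies $|E|=n$ but contains no edge-simple path of length $k$ once $k\ge 4$. I would also point out, in the same breath, that the proof gives slightly more than stated (one can extract a vertex-simple $k$-path, not merely an edge-simple one), which is the feature that lets the later strengthenings of this weak counting lemma bootstrap themselves by considering many such $k$-paths in parallel.
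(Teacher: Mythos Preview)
Your proof is correct and matches the paper's own argument almost verbatim: both observe that $|E(H)|\ge n$ forces a cycle, that the girth hypothesis forces this cycle to have more than $2k$ edges, and that any length-$k$ subpath of such a cycle is the desired edge-simple $k$-path. Your additional remarks (that the girth hypothesis is essential, and that the path obtained is in fact vertex-simple) are accurate and helpful context, though not part of the paper's presentation.
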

\begin{proof}
Since $|E(H)| \ge n$, $H$ contains a cycle $C$.
Since $H$ has girth $>2k$, there are $>2k$ edges in $C$.
Thus, any subpath of $C$ of length $k$ is an edge-simple $k$-path.
\end{proof}

\begin{lemma} [Unweighted Medium Counting Lemma]
$H$ contains at least $|E(H)| - n$ edge-simple $k$-paths.
\end{lemma}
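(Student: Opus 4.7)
The plan is to prove this by induction on $|E(H)|$, using the Unweighted Weak Counting Lemma as an engine that we apply repeatedly, removing one edge per iteration.

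The base case handles $|E(H)| < n$, where the inequality $|E(H)| - n$ is non-positive and so the claim is vacuous (the empty collection of edge-simple $k$-paths suffices). For the inductive step, suppose $|E(H)| \ge n$. By the Weak Counting Lemma, $H$ contains at least one edge-simple $k$-path $\pi$. Pick any edge $e$ of $\pi$, and let $H' := H \setminus \{e\}$. Since removing edges cannot shorten any cycle, $H'$ still has girth $> 2k$, so the hypothesis of the lemma is preserved. By induction, $H'$ contains at least $|E(H')| - n = |E(H)| - n - 1$ edge-simple $k$-paths. Each of these is also an edge-simple $k$-path in $H$, and crucially none of them uses $e$ (since $e \notin E(H')$), whereas $\pi$ does use $e$. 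Therefore $\pi$ is distinct from all the paths produced by the inductive hypothesis, and combining them yields at least $(|E(H)| - n - 1) + 1 = |E(H)| - n$ distinct edge-simple $k$-paths in $H$, completing the induction.

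The only step requiring any care is ensuring the paths counted across the recursion are genuinely distinct. This is handled automatically by removing an edge of the newly produced path: any path surviving in $H'$ avoids that edge, so it cannot coincide with $\pi$. Everything else — the preservation of the girth hypothesis under edge deletion, and the application of the Weak Counting Lemma — is immediate. I do not anticipate any real obstacle; the lemma is essentially a bookkeeping bootstrap from the weak version, structurally mirroring how the Moore bounds will later extract a full counting lemma from this medium one.
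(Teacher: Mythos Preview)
Your proof is correct and is essentially identical to the paper's argument: the paper phrases it as an iterative process (find a path, record it, delete one of its edges, repeat until the weak counting lemma no longer applies), which is exactly the unrolled form of your induction, with the same distinctness guarantee coming from the deleted edge. The only extra remark you make explicit---that deleting edges preserves the girth hypothesis---is left implicit in the paper but is of course needed either way.
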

\begin{proof}
Repeat the following process until no longer possible: find an edge-simple $k$-path $\pi$, record it, and then delete any \textbf{one} edge in $\pi$ from $H$ to ensure that we don't re-record $\pi$ in a future round.
By the weak counting lemma, we may repeat this process for at least $|E(H)| - n$ rounds.
\end{proof}

\begin{lemma} [Unweighted Full Counting Lemma] \label{lem:moorefc}
Let $d := |E(H)|/n$.
If $d \ge 2$, then $H$ contains $n \cdot \Omega(d)^k$ edge-simple $k$-paths.
\end{lemma}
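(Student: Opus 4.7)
The plan is to extend the standard Moore-bound counting argument by first reducing $H$ to a dense subgraph of high minimum degree, and then counting length-$k$ non-backtracking walks in that subgraph, leveraging the girth condition to ensure these walks are automatically edge-simple.

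For the reduction, I would iteratively delete from $H$ any vertex whose current degree is at most $d/4$. Each deletion removes at most $d/4$ edges; across the at most $n$ deletions this loses at most $nd/4$ of the original $nd$ edges. The surviving subgraph $H'$ therefore satisfies $|E(H')| \geq 3nd/4$, has minimum degree strictly greater than $d/4$, and inherits girth $>2k$ as a subgraph of $H$.

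For the counting, I would fix any oriented edge $(u,v)$ of $H'$ as the opening step of a non-backtracking walk and extend step by step: at each of the $k-1$ subsequent positions, the walk has at least (minimum degree) $- 1 > d/4 - 1$ valid continuations. This produces at least $(d/4-1)^{k-1}$ length-$k$ non-backtracking walks per oriented edge. Summing over the $2|E(H')| \geq 3nd/2$ oriented edges and dividing by $2$ (each unordered $k$-path corresponds to two orderings) yields at least $\frac{3nd}{4}(d/4-1)^{k-1} = n \cdot \Omega(d)^k$ edge-simple $k$-paths, whenever $d$ exceeds an absolute constant. In the residual regime where $d$ is a small constant, the Medium Counting Lemma already delivers $(d-1)n \geq n$ paths, and the constant hidden in the $\Omega(\cdot)$ can be chosen so that $\Omega(d)^k \leq 1$, so $n$ paths suffice.

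The main obstacle is rigorously establishing the girth-based bridge: a non-backtracking walk of length $\leq k$ in a graph of girth $>2k$ is automatically edge-simple. The intended argument is that a repeated edge in such a walk identifies an intermediate closed subwalk of length $\leq k$, and since any closed walk of length $\ell$ contains a cycle of length $\leq \ell$, one extracts a cycle of length $\leq k < 2k$, contradicting girth $>2k$. The small subtlety is that one must pick the right intermediate subwalk in the two sub-cases (repeated edge traversed in the same versus opposite orientation), but in both cases the closed walk turns out to be short enough for the contradiction to go through.
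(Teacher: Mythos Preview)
Your argument is correct, but it takes a genuinely different route from the paper's. The paper does \emph{not} pass to a high-minimum-degree subgraph or count non-backtracking walks directly; instead it bootstraps from the Medium Counting Lemma via random edge sampling. Concretely, it keeps each edge of $H$ independently with probability $2/d$, so that the expected edge count of the sample $H'$ is $2n$; the Medium Counting Lemma then gives $\mathbb{E}[p'] \ge \mathbb{E}[|E(H')|-n] = n$, while each edge-simple $k$-path of $H$ survives with probability $(2/d)^k$, yielding $p \ge n \cdot \Theta(d)^k$ after rearranging. Your approach is the more classical Moore-bound argument: peel to minimum degree $\Omega(d)$, count length-$k$ non-backtracking walks there, and use girth $>2k$ to certify edge-simplicity. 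It is more elementary (no probability) but requires the separate girth-based lemma that short non-backtracking walks are simple, and it does not set up the ``sample-then-apply-medium-counting'' template that the paper reuses verbatim in the Warmup and Main lightness proofs (Lemmas~\ref{lem:warmupfc} and the final Full Counting Lemma). One small imprecision worth tightening when you write it up: the assertion ``any closed walk of length $\ell$ contains a cycle of length $\le \ell$'' is false as stated (think of traversing a single edge back and forth); what you actually need, and what holds here, is that any \emph{non-backtracking} closed walk in a simple graph has length $\ge 3$ and contains a cycle of length at most its own length. Since your intermediate closed subwalk inherits non-backtracking from the ambient walk, this is exactly the hypothesis you have.
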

\begin{proof}
Let $H'$ be a random edge-subgraph of $H$, obtained by keeping each edge independently with probability $2/d$.
Let $p, p'$ be the number of edge-simple $k$-paths in $H, H'$, respectively.
On one hand, for any edge-simple $k$-path $\pi$ in $H$, the probability that $\pi$ survives in $H'$ is $\Theta(d)^{-k}$, and so
$\mathbb{E}[p'] = p \cdot \Theta(1/d)^{k}$.
On the other hand, we have
\begin{align*}
\mathbb{E}[p'] &\ge \mathbb{E}[|E(H')| - n] \tag*{Medium Counting Lemma}\\
&= |E(H)| \cdot \frac{2}{d} - n\\
&= 2n - n\\
&= n.
\end{align*}
Combining these inequalities, we get $n \le p \cdot \Theta(1/d)^{k}$, and rearranging gives $p \ge n \cdot \Theta(d)^k$.
\end{proof}

We are now ready to complete the proof of the Moore bounds.
Let $d := |E(H)|/n$.
If $d < 2$ then $|E(H)| = O(n)$ and we are done.
Otherwise, if $d \ge 2$, then by the full counting lemma $H$ has $n \cdot \Omega(d)^k$ edge-simple $k$-paths.
Meanwhile, the dispersion lemma implies that $H$ has $O(n^2)$ edge-simple $k$-paths.
Comparing these estimates, we get
$$n \cdot \Omega(d)^k \le O(n^2).$$
Rearranging terms in this inequality, we get
$d \le O(n^{1/k})$,
and so $|E(H)| = O(n^{1+1/k})$.

\section{Lightness Reductions from Prior Work \label{sec:priorwork}}

\subsection{The Weighted Girth Framework \label{sec:wtgirth}}

In our previous proof of the Moore bounds, the first step is to observe that the output spanner of the greedy algorithm has high girth.
For lightness, Elkin, Neiman, and Solomon \cite{ENS15} formalized the analogous method, which we will use in this paper.

\begin{definition} [Normalized Weight and Weighted Girth \cite{ENS15}]
For a cycle $C$ in $G$, we define its normalized weight to be
$$w^*(C) := \frac{w(C)}{\max_{e \in C} w(e)}.$$
The weighted girth of $G$ is the minimum value of $w^*(C)$ over all cycles $C$ in $G$.
\end{definition}

\begin{lemma} [\cite{ENS15}] \label{lem:wtgirth}
The greedy algorithm with parameter $t$ returns a graph $H$ with weighted girth $>t+1$.
\end{lemma}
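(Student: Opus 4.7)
The plan is to argue by contradiction, mirroring the classical proof that the greedy spanner has girth $>2k$, but using normalized weight in place of the number of edges in the cycle. Suppose $H$ contains some cycle $C$ with $w^*(C) \le t+1$. I want to derive that one of the edges of $C$ would have been rejected by the greedy algorithm, contradicting its presence in $H$.

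The key choice is to look at the last edge of $C$ that the greedy algorithm processed. Call this edge $e = (u,v)$, and let $w_e := w(e)$. Since the algorithm processes edges in order of nondecreasing weight, $e$ has maximum weight in $C$, so $\max_{e' \in C} w(e') = w_e$ and therefore $w(C) \le (t+1) w_e$ by our assumption on $w^*(C)$. All other edges of $C$ were processed strictly before $e$ (with ties broken consistently so that $e$ is genuinely ``last''), and all of them belong to the final $H$ because they lie on the cycle $C \subseteq H$; once added, edges are never removed. Hence, at the moment $e$ was considered, the path $C \setminus \{e\}$ from $u$ to $v$ was already present in $H$.

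I would then read off the distance bound at that moment:
\[
\dist_H(u,v) \;\le\; w(C) - w_e \;\le\; (t+1)w_e - w_e \;=\; t \cdot w_e.
\]
But the greedy algorithm adds $e$ only when $\dist_H(u,v) > t \cdot w(u,v) = t \cdot w_e$, so $e$ would not have been added. This contradicts $e \in H$, completing the proof.

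There is no real obstacle here; the only subtle point is the handling of ties in edge weights, which I would address by fixing any consistent total order used by the algorithm and defining ``last-processed edge of $C$'' with respect to that order. This guarantees both that $w_e$ equals the maximum weight in $C$ and that every other edge of $C$ was already permanently committed to $H$ when $e$ was examined.
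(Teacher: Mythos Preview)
Your proof is correct and follows essentially the same argument as the paper: pick the last-processed edge $(u,v)$ of a cycle $C$ with $w^*(C)\le t+1$, use that the remaining edges of $C$ are already in $H$ to bound $\dist_H(u,v)\le w(C)-w(u,v)\le t\cdot w(u,v)$, and conclude that the greedy algorithm would have rejected $(u,v)$. The only cosmetic difference is that the paper phrases it as ``not all edges of $C$ can be added to $H$'' rather than as a contradiction from $C\subseteq H$, and your explicit handling of ties is a nice addition.
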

\begin{proof}
Let $C$ be a cycle in $G$ of normalized weight $w^*(C) \le t+1$.
It suffices to argue that not all edges of $C$ will be added to $H$.
Let $(u, v)$ be the last edge in $C$ considered by the greedy algorithm, and suppose that all previous edges in $C$ were added to $H$.
Then there is a $u \leadsto v$ path in $H$, through the other edges in $C$, of total weight
\begin{align*}
w(C) - w(u, v) \le (t+1) \cdot w(u, v) - w(u, v) = t \cdot w(u, v).
\end{align*}
This inequality means that the greedy algorithm will reject the edge $(u, v)$, rather than adding it to $H$, and so the cycle $C$ does not survive in $H$.
\end{proof}

Also note that the greedy algorithm essentially contains a run of Kruskal's algorithm within it \cite{CLRS22}, and so the output spanner $H$ contains an $\mst$ of $G$.
This means that instead of bounding $\ell(H \mid G)$, we may equivalently bound $\ell(H)$ over all graphs $H$ of weighted girth $>(1+\eps)2k$.

\subsection{Reduction to Unit-Weight Spanning Cycles \label{sec:utmst}}

It will be convenient in the main proof to reduce to the setting where the spanner $H$ has a very particular structure for its $\mst$:

\begin{definition} [Unit-Weight Spanning Cycles]
We say that a cycle $\cee$ in $H$ is a unit-weight spanning cycle if $\cee$ is Hamiltonian (it contains each node exactly once), all edges in $\cee$ have weight $1$, and all edges in $E(H)$ have weight $\ge 1$.
\end{definition}
So, for example, any tree created by deleting any one edge from a unit-weight spanning cycle $\cee$ is an $\mst$.
We will reduce to the case where $H$ has a unit-weight spanning cycle; the parts of this reduction all appear implicitly or explicitly in prior work \cite{CW18, LS23, ENS15}.

\begin{lemma}
Let $H$ be a non-forest $n$-node graph with weighted girth $>t$ and lightness $\ell$.
Then there exists a graph $H'$ with $O(n)$ nodes, weighted girth $>t$, lightness $\Omega(\ell)$, and a unit weight spanning cycle.
\end{lemma}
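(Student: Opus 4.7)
My plan is to construct $H'$ from $H$ via a sequence of weight normalizations and structural transformations, synthesizing techniques appearing in \cite{ENS15, CW18, LS23}. First I would scale all edge weights of $H$ so the minimum edge weight is $1$, which preserves both weighted girth and lightness. A weight-class partitioning (``bucketing'') argument, standard in the light spanner literature, then reduces to the case where MST edge weights all lie in a bounded range; a further rescaling makes them positive integers. The bucketing step is what ultimately guarantees that the MST-related vertex count in the construction below stays $O(n)$ rather than blowing up with the ratio of max to min MST weight.

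Next I would form an intermediate graph $\widehat{H}$ by replacing each MST edge $e$ of integer weight $w(e)$ with a path of $w(e)$ unit-weight edges, inserting $w(e) - 1$ fresh Steiner vertices. The modified MST $\widehat{T}$ is then a unit-weight tree on $\Theta(n)$ vertices, while non-MST edges of $H$ are kept with their original weights (all $\ge 1$ after the scaling). To convert $\widehat{T}$ into a Hamiltonian cycle, I would double every edge of $\widehat{T}$ and take an Euler tour $\tau$; each vertex $v$ of $\widehat{T}$ appears $\deg_{\widehat{T}}(v)$ times in $\tau$. In $H'$ I would create a distinct copy of $v$ for each occurrence in $\tau$ and connect consecutive copies along $\tau$ by unit-weight cycle edges. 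For each non-MST edge $(u,v) \in E(\widehat{H})$, I would attach it in $H'$ between a specific pair of copies of $u$ and $v$ chosen so that the shorter Hamiltonian-cycle arc between them traces exactly the $\widehat{T}$-path from $u$ to $v$; such a pair exists by the standard structure of Euler tours of trees.

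Most of the verification is mechanical: $|V(H')| = O(n)$ and the existence of the unit-weight Hamiltonian cycle are immediate from the construction, and the lightness bound $\ell(H' \mid H') = \Omega(\ell)$ follows because the bulk of $w(H)$ lies on non-MST edges (retained verbatim in $H'$) while $w(\mst(H')) = O(n)$. The main obstacle is verifying that $H'$ has weighted girth $> t$. For a cycle $C'$ in $H'$ using just one non-MST edge, the aligned arc attachment yields a cycle of $\widehat{H}$ with identical normalized weight, so the weighted girth bound transfers immediately. The hard case is cycles using multiple non-MST edges: their projections to closed walks in $\widehat{H}$ need not be simple, since the Hamiltonian-cycle arcs between different non-MST edges may traverse parts of $\widehat{T}$ outside the direct $\widehat{T}$-paths. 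I would handle this via a cycle-space decomposition combined with the structural fact that every Hamiltonian-cycle arc in $H'$ has length at least the $\widehat{T}$-distance between its endpoints (any Euler-tour segment must cross every edge of that path). Together with the observation that non-MST edges of weight $\ge 1$ dominate unit cycle edges so max-edge weights align between a $C'$ and its projection, this pushes the weighted-girth bound from $\widehat{H}$ to $H'$.
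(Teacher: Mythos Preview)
Your Euler-tour construction and your cycle-space approach to the weighted-girth check are broadly in line with the paper, but the normalization step has a real gap. Scaling so that the \emph{minimum} edge weight equals $1$ does not bound the total MST weight, and there is no ``standard bucketing'' that reduces to MST edge weights in a bounded range while outputting a \emph{single} graph: you cannot discard or separate MST edges into weight classes without destroying connectivity, and recombining classes afterward would not yield one $H'$ on $O(n)$ vertices. The paper's trick is different and much simpler: rescale so that the \emph{average} MST edge weight is $1$. Then the total MST weight is exactly $n-1$, so subdividing each MST edge $e$ with $w(e)>1$ into a path of $\lceil w(e)\rceil$ unit-scale edges introduces at most $n-1$ new vertices in total. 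A final pass rounds every edge of weight $<1$ up to $1$; one checks this at most doubles $w(\mst)$ and can only increase the weighted girth (using that $|C|\ge w^*(C)$ for any cycle $C$). No bucketing is needed.

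Two smaller remarks. Your careful alignment of copies (so that the shorter Hamiltonian arc realizes the tree path) is unnecessary: the paper attaches each non-MST edge to \emph{arbitrary} copies. The weighted-girth bound still transfers because the collapsing map $H'\to H$ preserves edge weights, so any cycle $C'$ in $H'$ containing a non-spanning-cycle edge projects to a closed walk in $H$ of the same total weight, and the simple cycle in that walk containing the heaviest edge has normalized weight at most $w^*(C')$. Also, your justification that ``the bulk of $w(H)$ lies on non-MST edges'' is neither needed nor always true; what matters is simply $w(H')\ge w(H)$ together with $w(\mst(H'))=O(w(\mst(H)))$, which follows once the normalization is done correctly.
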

\begin{proof}
We split the reduction into two steps: first we modify to a unit-weight $\mst$, and then we change the $\mst$ into a spanning cycle.

\paragraph{Reduction to Unit-Weight $\mst$.}

\begin{itemize}
\item Rescale the edge weights of $H$ so that the average edge weight in $\mst(H)$ is $1$.
Rescaling affects neither the weighted girth nor the lightness of $H$.

\item For all edges $e \in \mst(H)$ of weight $w(e) > 1$, add new nodes to subdivide $e$ into a path of $\lceil w(e) \rceil$ edges.
Each new edge is assigned weight $w(e) / \lceil w(e) \rceil$.
Notice that:
\begin{itemize}
\item The total weight in $\mst(H)$ is $n-1$ after the previous rescaling step, and therefore we add at most $n-1$ new nodes to $H$ in this step.

\item Additionally, the weighted girth of $H$ does not change as we subdivide edges.
This holds since for each cycle $C$ the total weight of $C$ is unchanged, and also $C$ must have a heaviest edge that is not in $\mst(H)$; since this heaviest edge is not subdivided, the normalized weight $w^*(C)$ remains the same.
\end{itemize}

\item Finally, for all edges $e \in E(H)$ of weight $< 1$, increase $w(e)$ to $1$ (this step is applied to both $\mst$ and non-$\mst$ edges).
Notice that:
\begin{itemize}
\item By the previous step every edge in $\mst(H)$ has weight in the range $[1/2, 1]$, and so this step increases $w(\mst(H))$ by at most a factor of $2$, and

\item The weighted girth of $H$ is nondecreasing in this step.
This is because for each cycle $C$, $w(C)$ is increasing, and moreover: (1) if $C$ has an edge of weight $\ge 1$ then its heaviest edge weight does not change, and (2) if all edges in $C$ have weight $\le 1$, then after this step all edges have weight $1$, so its normalized weight is $w^*(C) = |C| > t$. 
\end{itemize}
\end{itemize}

\paragraph{Reduction to Spanning Cycle.}

Next, we reduce to the setting where $H$ has a spanning cycle~$C$.

\begin{figure}[h]
\begin{center}
	\includegraphics[width=\textwidth]{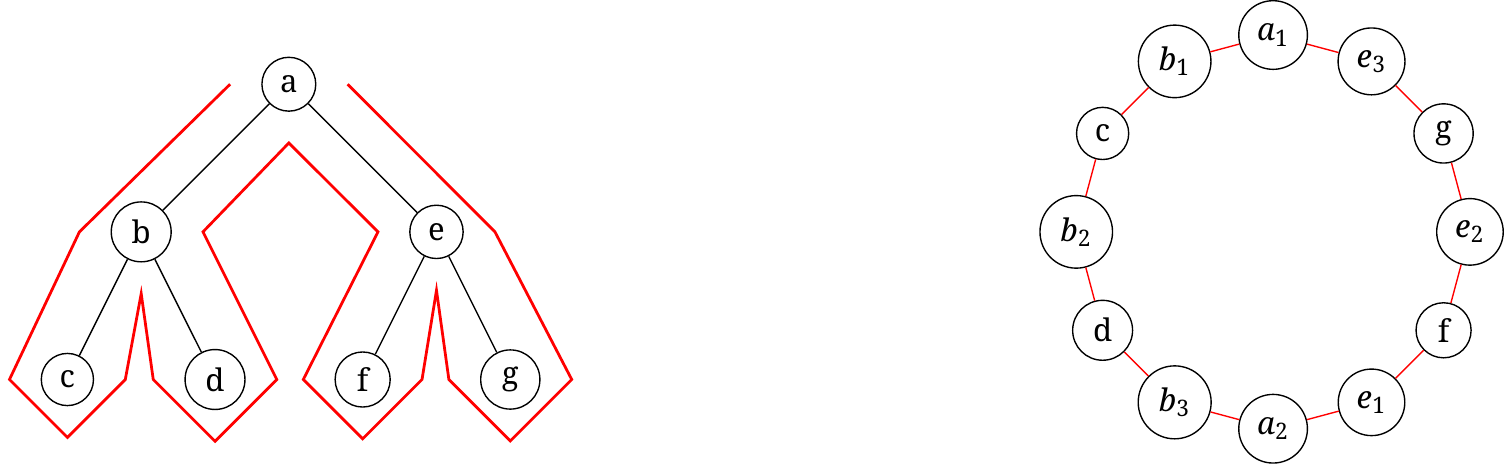}
\end{center}
\caption{(Left) For the pictured $\mst(H)$, the red line around the outside of the tree traces the tour $T = (a, b, c, b, d, b, a, e, f, e, g, e, a)$.  (Right) We construct $H'$ by mapping the tour $T$ to a spanning cycle on $2n-2$ nodes, making copies of nodes from $H$ as needed.}
\end{figure}

A \emph{tour} $T$ of $\mst(H)$ is a circularly-ordered sequence of nodes, with repeats, of the form $T = (v_0, v_1, \dots, v_{2n-2}=v_0)$, which is the node sequence of a closed walk on $\mst(H)$ that uses every edge exactly twice with opposite orientations.
Fix a tour $T$ of $\mst(H)$, and then construct $H'$ as follows:
\begin{itemize}
\item A tour of an $n$-node tree always contains exactly $2n-2$ nodes.
We will take these $2n-2$ nodes as the vertex set of $H'$; that is, some nodes in $H$ have several copies in $H'$.

\item The tour $T$ will be the spanning cycle of $H'$, meaning that for each pair of adjacent nodes along $T$, we include the corresponding edge in $E(H')$ with weight $1$.

\item For each non-spanning-cycle edge $(u, v) \in E(H \setminus T)$, we choose an arbitrary copy $u_i, v_j \in V(H')$ of $u, v$ respectively.
Then we include $(u_i, v_j) \in E(H')$ with the same weight as $(u, v)$.
\end{itemize}

It is immediate from the construction that $H'$ has $O(n)$ nodes and a unit-weight spanning cycle, and that $w(H') \ge w(H)$, and so lightness only changes by a constant factor.
Moreover, we only create one new cycle when we move from $H$ to $H'$, which is the spanning cycle $T$ itself.
We have $w^*(T) = 2n-2$, and since we have assumed that $H$ is not a forest, it has a cycle and therefore its weighted girth is $\le n$.
It follows that the weighted girth of $H'$ is at least as large as the weighted girth of $H$.
\end{proof}

One of the advantages of reducing to the case where $H$ has a unit-weight spanning cycle is that we can limit its maximum edge weight:
\begin{lemma} \label{lem:maxwt}
Let $H$ be an $n$-node graph with weighted girth $>t$ and a unit weight spanning cycle $\cee$.
Then all edges in $H$ have weight $< \frac{n}{2(t-1)}$.
\end{lemma}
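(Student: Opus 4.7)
The plan is a short proof by contradiction that leverages the spanning cycle $\cee$ to shortcut any heavy edge. Assume toward a contradiction that some edge $e = (u, v) \in E(H)$ has $w(e) \geq \frac{n}{2(t-1)}$; the goal is to exhibit a cycle through $e$ with normalized weight at most $t$, contradicting weighted girth $>t$.

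First I would use $\cee$ to construct a shortcut. Since $\cee$ is Hamiltonian on $n$ nodes with every edge of weight $1$, its two arcs between $u$ and $v$ partition its $n$ edges, so the shorter arc $P$ uses at most $n/2$ unit-weight edges. I take $C := \{e\} \cup P$, a cycle through $e$. (The degenerate case $e \in \cee$ is not interesting: then $w(e) = 1$ and the bound holds in the operating regime $n > 2(t-1)$, so I focus on $e \notin \cee$.)

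Then the desired bound on $w^*(C)$ is a one-line calculation. The total weight satisfies $w(C) \leq w(e) + n/2$. Crucially, since $w(e) \geq 1$ by the definition of a unit-weight spanning cycle while every edge of $P$ has weight exactly $1$, the maximum edge weight in $C$ is $w(e)$ itself. Hence
\[ w^*(C) \;\leq\; \frac{w(e) + n/2}{w(e)} \;=\; 1 + \frac{n/2}{w(e)} \;\leq\; 1 + \frac{n/2}{n/(2(t-1))} \;=\; t, \]
contradicting the weighted girth hypothesis. Rearranging yields the claimed strict inequality $w(e) < \frac{n}{2(t-1)}$.

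I do not anticipate any real obstacle; the argument is essentially the observation that the spanning cycle gives a guaranteed detour of length at most $n/2$ around any edge. The one subtlety worth flagging is verifying that $w(e)$ (rather than some edge of $P$) realizes the maximum weight in $C$, which is precisely where the ``all edges of $H$ have weight $\geq 1$'' clause in the definition of a unit-weight spanning cycle is invoked.
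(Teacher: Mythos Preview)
Your proposal is correct and matches the paper's own proof essentially line for line: both form the cycle through $(u,v)$ and the shorter spanning-cycle arc, bound its normalized weight by $1 + \frac{n}{2w(u,v)}$, and compare against the weighted-girth hypothesis. You frame it as a contradiction while the paper argues directly, and you are a bit more explicit than the paper about why $w(e)$ realizes the maximum in $C$ (invoking the $\ge 1$ clause), but the substance is identical.
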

\begin{proof}
Consider an edge $(u, v)$, and consider the cycle formed by $(u, v)$ and the shorter $u \leadsto v$ path through the spanning cycle, which uses at most $n/2$ edges.
The normalized weight of this cycle is at most
$$\frac{w(u, v) + n/2}{w(u, v)} = 1 + \frac{n}{2w(u, v)}.$$
By the weighted girth of $H$, this quantity must be $>t$.
Rearranging, we get
$w(u, v) < \frac{n}{2(t-1)}$.
\end{proof}

\section{Warmup 2: Lightness Bounds via Monotone Paths \label{sec:warmup}}

We will next prove a weaker version of our main result, with an additional suboptimal $k$ factor, in order to introduce some of our new proof ideas.
\begin{theorem} [Warmup] \label{thm:warmupmain}
Let $\eps>0$, let $k, n$ be positive integers, and let $H$ be an $n$-node graph with a unit-weight spanning cycle $\cee$ and weighted girth $>(1+2\eps) \cdot 2k$.
Then
$$w(H) = O\left( \eps^{-1} k n^{1+1/k} \right).$$
\end{theorem}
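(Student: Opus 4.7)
The plan is to port the Moore bounds from Section~\ref{sec:moore} to the weighted setting, replacing edge-simple $k$-paths with \emph{monotone $k$-paths}: edge-simple paths $\pi = (e_1, \dots, e_k)$ whose weights satisfy $w(e_1) \le w(e_2) \le \cdots \le w(e_k)$. The monotonicity constraint is a convenient structural handle because the heaviest edge is always the last one, so $w(e_k) \le w(\pi) \le k \cdot w(e_k)$, which lets us convert between total path weight and maximum edge weight cleanly.

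First I would prove a weighted dispersion lemma: for each ordered pair of endpoints $(s, t)$, there is at most one monotone $k$-path from $s$ to $t$. Suppose two distinct such paths $\pi_a, \pi_b$ existed, with (say) $w(e_k^a) \le w(e_k^b)$. As in the unweighted argument, $\pi_a \cup \pi_b$ contains a cycle $C$ on at most $2k$ edges; picking a cycle from the symmetric difference that contains $e_k^b$ guarantees $w_{\max}(C) \ge w(e_k^b)$, while $w(C) \le w(\pi_a) + w(\pi_b) \le 2k \cdot w(e_k^b)$. This gives $w^*(C) \le 2k < (1+2\eps)\cdot 2k$, contradicting the weighted girth hypothesis. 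The conclusion is an upper bound of $O(n^2)$ on the total number of monotone $k$-paths in $H$.

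Next I would prove a weighted counting lemma giving a matching lower bound in terms of $w(H)$, bootstrapping through weak / medium / full versions as in Section~\ref{sec:moore}. The cleanest adaptation is to bucket edges into $O(\eps^{-1} \log n)$ geometric weight classes (using the maximum-weight bound from Lemma~\ref{lem:maxwt}), run a Moore-type counting argument within one bucket at a time, and stitch pieces together into monotone paths by climbing through buckets of increasing weight. The random-subsampling step from Lemma~\ref{lem:moorefc} is then applied bucket-by-bucket, with sampling probability calibrated to the typical edge weight in each bucket so that the effective ``degree'' reflects the weighted degree $w(H)/n$ rather than the combinatorial degree.

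Combining the $O(n^2)$ dispersion upper bound with the counting lower bound should yield $w(H)/n = O(\eps^{-1} k \cdot n^{1/k})$, giving the claimed $w(H) = O(\eps^{-1} k n^{1+1/k})$. The extra $k$ factor over the main theorem is the price of the monotonicity restriction: only one of every $k!$ orderings of $k$ edges survives, and propagating this loss through the counting step costs a single factor of $k$. The main obstacle I expect is the dispersion argument: when $\pi_a$ and $\pi_b$ share internal vertices, one must carefully pick a cycle from the symmetric difference that actually contains the heaviest edge $e_k^b$, rather than a spurious small cycle of light edges. A secondary challenge is to align the $\eps$-slack in the weighted girth with the geometric bucketing ratio so that it is precisely consumed, producing the $\eps^{-1}$ dependence rather than an unwanted $\log n$ or $\eps^{-2}$ factor.
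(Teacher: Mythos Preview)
Your dispersion lemma is essentially right---the paper proves a close variant of it, via the ``largest index $j$ where the paths differ'' trick that you correctly anticipate---but the counting half of your plan has a structural gap. The number of monotone edge-simple $k$-paths in $H$ is a purely combinatorial quantity: it depends on the edge set and on the \emph{relative ordering} of the weights, not on their magnitudes. If you double every non-spanning-cycle edge weight, $w(H \setminus \cee)$ doubles but not a single new monotone $k$-path appears. So no bucketing scheme or calibrated subsampling can manufacture a lower bound on the path count that scales with $w(H)/n$; at best you recover a bound in $|E(H)|/n$, which is the sparsity statement you already know. The sentence ``sampling probability calibrated to the typical edge weight in each bucket so that the effective degree reflects the weighted degree $w(H)/n$'' is exactly the step that cannot be filled in.

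The missing idea---and the whole point of the unit-weight spanning cycle reduction---is to convert weight into multiplicity \emph{before} counting. The paper replaces each non-spanning-cycle edge $e=(u,v)$ by a family of $\Theta(\eps w(e))$ \emph{edge-safe paths}: for each integer $0 \le s \le \eps w(e)$, the path that takes $s$ \forward{} spanning-cycle steps, then $e$, then $s$ \backward{} steps. A monotone safe $k$-path is then a concatenation of $k$ such gadgets with increasing underlying edge weights; a hiker argument (each edge $e$ is traversed by $\Theta(\eps w(e))$ hikers, so the $n$ hikers together walk $\Theta(\eps w(H\setminus\cee))$ gadgets) shows there are $n \cdot \Omega(\eps d/k)^k$ such paths when $d=w(H\setminus\cee)/n$. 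The dispersion lemma---now carrying a $(1+2\eps)$ factor that exactly accounts for the $\le 2\eps w(e)$ extra spanning-cycle weight in each gadget---caps this at $n^2$. So the $\eps$ slack in the weighted-girth hypothesis is consumed by the safe-path padding, not by geometric bucketing; your $O(\eps^{-1}\log n)$ bucket decomposition is closer in spirit to the older arguments of \cite{CDNS95, ENS15}, but even those need a weight-to-count conversion that your sketch does not supply.
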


The $2\eps$ term in the weighted girth, rather than $\eps$, is purely for convenience in the analysis to follow; by reparametrizing $\eps \gets \eps/2$ it does not affect the theorem statement.
We will also assume for convenience that all non-spanning-cycle edges in $H$ have distinct weights; if not, any tiebreaking method will work, e.g.\ the lexicographically smaller edge is considered lighter.
Finally, we arbitrarily choose one direction around the spanning cycle $\cee$ to be the \forward{} direction, and the reverse to be \backward{}.

\subsection{Edge-Safe Paths and Warmup Proof Intuition}

Our first step is to define the kind of paths that will be the focus of our Moore-bound-like counting argument.
The following kind of path is analogous to a \emph{single edge} in the Moore bounds; we will ultimately focus on paths made of $k$ concatenated copies of this basic building block.

\begin{definition} [Edge-Safe Paths]
A path $\pi$ is \emph{safe} for an edge $(u, v)$ if, for some integer $0 \le s \le \eps w(u, v)$, it has the following structure: it starts with a prefix of exactly $s$ \forward{} spanning cycle edges, then it uses the edge $(u, v)$, and then it ends with a suffix of exactly $s$ \backward{} spanning cycle edges.
We will say that $\pi$ is \emph{extra-safe} for $(u, v)$ if $s \le \eps w(u, v)/2$.
\end{definition}

\begin{figure}[h]
\begin{center}
\begin{tikzpicture}

   \def\radius{1.5cm} 

    \draw[gray] (0,0) circle (\radius);

    \fill[red] (60:\radius) circle (2mm) node [above=0.2] (u) {$u$};
    \draw[red, ultra thick, ->] (0:\radius) arc (0:60:\radius) node [midway, right] {\small $s \le \eps w(u, v)$};

    \fill[red] (270:\radius) circle (2mm) node [below=0.2] (v) {$v$};;
    \draw[red, ultra thick, ->] (270:\radius) arc (270:210:\radius) node [midway, below left] {\small $s \le \eps w(u, v)$};
    
    \draw [ultra thick, red] (u) to (v);

\end{tikzpicture}
\end{center}
\caption{A path that is safe for the edge $(u, v)$ (considering the counterclockwise direction around the spanning cycle to \forward{}, and clockwise to be \backward{})}
\end{figure}
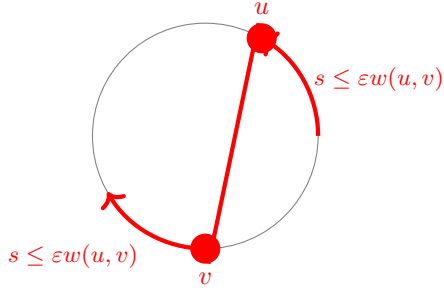

The detail of extra-safety in this definition can be ignored for now; it will be useful later for a minor technical reason in the counting lemma.
The requirement that an edge-safe path uses the same number of \forward{} and \backward{} spanning cycle edges enables the following simple yet important technical claim:

\begin{claim} \label{clm:esmatching}
Let $q, q'$ be paths in $H$ that are safe for edges $e, e'$ respectively, and which share an endpoint node $y$.
If $q \ne q'$, then $e \ne e'$.
\end{claim}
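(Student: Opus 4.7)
The plan is to pin down both endpoints of a safe path directly from its parameter $s$ using positions along the unit-weight spanning cycle $\cee$. Orient $\cee$ in the \forward{} direction and write $\sigma(x) \in \{0, \ldots, n-1\}$ for the position of a node $x$. Unpacking the definition of a safe path step by step, if $q$ is safe for $(u, v)$ with parameter $s$, then its two endpoints must sit at positions $\sigma(u) - s$ and $\sigma(v) - s$ modulo $n$. Lemma \ref{lem:maxwt} applied with $t = (1+2\eps) \cdot 2k$ guarantees $s \le \eps w(u,v) \ll n$, so these modular identities behave like honest integer equations.

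Next I would prove the contrapositive: assuming $e = e' = (u, v)$ with respective parameters $s$ and $s'$, together with a shared endpoint $y$, I would show $s = s'$ and hence $q = q'$. The equations placing $y$ as an endpoint of each of $q, q'$ split into two cases. In the ``aligned'' case, $y$ is on the $u$-side of both paths (or the $v$-side of both), and the equation immediately gives $s = s'$. In the ``swapped'' case, $y$ is on the $u$-side of one path and the $v$-side of the other, yielding
\[
\sigma(u) - \sigma(v) \equiv \pm (s - s') \pmod{n}.
\]

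The main step is to rule out the swapped case using weighted girth. When it occurs, one of the two arcs of $\cee$ between $u$ and $v$ has length exactly $|s - s'| \le \eps w(u, v)$ spanning-cycle edges. Concatenating that arc with the edge $(u, v)$ forms a cycle $C$ in $H$ whose total weight is at most $w(u,v) + \eps w(u,v) = (1+\eps) w(u,v)$, and whose heaviest edge is $(u, v)$ itself (since $w(u,v) \ge 1$ and every spanning-cycle edge has weight $1$). Therefore $w^*(C) \le 1 + \eps$, which contradicts the weighted-girth hypothesis $w^*(C) > (1+2\eps) \cdot 2k \ge 2 > 1 + \eps$.

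I expect the only real obstacle to be careful bookkeeping of orientations in the case analysis, namely which endpoint of each path the node $y$ plays, and which of the two arcs of $\cee$ is short. Once the endpoints are encoded as offsets $\sigma(\cdot) - s$, both the aligned and swapped cases collapse to a single line each.
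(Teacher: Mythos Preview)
Your argument is correct. The aligned case is exactly what the paper does: once $s \equiv s' \pmod{n}$ and both parameters are below $n$ (via Lemma~\ref{lem:maxwt}), they coincide and the two paths agree.

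The difference is that you also treat the swapped case, which the paper's proof elides. The paper writes ``let $0 \le s < n$ be the number of \backward{} steps from $v$ to $y$'' and then asserts that both $q$ and $q'$ end with exactly $s$ \backward{} steps --- implicitly taking $y$ to lie on the $v$-side of the edge for \emph{both} paths. If ``safe for $(u,v)$'' is read as fixing the traversal direction $u\to v$, that assumption is justified and no swapped case arises; if the edge is read as unordered (so a safe path may cross it in either direction), the swapped case is genuine and your weighted-girth contradiction --- the short arc of length $|s-s'|\le \eps\, w(u,v)$ together with $(u,v)$ gives normalized weight at most $1+\eps < (1+2\eps)\cdot 2k$ --- is a clean way to dispose of it. Either way your proof is sound, and under the unordered reading it is the more complete of the two.
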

\begin{proof}
We prove the contrapositive.
Suppose $e = e' =: (u, v)$.
By Lemma \ref{lem:maxwt} we have $w(u, v) < n/2$, and so no path safe for $(u, v)$ can use more than $n$ spanning cycle edges.
Let $0 \le s < n$ be the number of \backward{} steps along the spanning cycle from $v$ to $y$.
Then both $q, q'$ must end with a suffix of exactly $s$ \backward{} steps along the spanning cycle.
So they also begin with a prefix of exactly $s$ \forward{} steps along the spanning cycle, ending at $u$, implying equality.
\end{proof}

In the same way that the Moore bounds focus on paths made up of $k$ edges, a natural proof attempt to bound lightness would be to focus on paths made up of $k$ edge-safe subpaths:

\begin{definition} [Safe $k$-Paths]
A path $\pi$ in $H$ is a \emph{safe $k$-path} if it can be partitioned into $k$ subpaths $\pi = q_1 \circ \dots \circ q_k$, where each path $q_i$ is safe for an edge $e_i$.
We say that $\pi$ is an \emph{extra-safe $k$-path} if each path $q_i$ is extra-safe for $e_i$.
\end{definition}

Unfortunately, this natural attempt breaks.
Specifically, the dispersion lemma fails: it is possible to have two edge-simple safe $k$-paths that share endpoints, without implying that $H$ has a cycle of small normalized weight.
This specifically occurs in the following figure:

\begin{figure}[ht]
\begin{center}

	\includegraphics[width=0.4\textwidth]{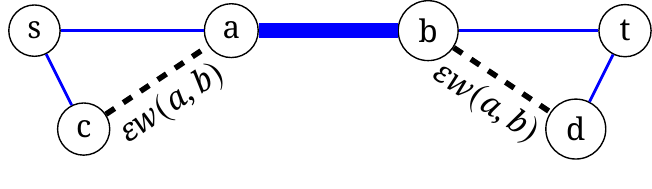}
\end{center}
\caption{\label{fig:girthfail} A counterexample to the dispersion lemma for safe $k$-paths}
\end{figure}

In Figure \ref{fig:girthfail}, the solid blue lines represent non-spanning-cycle edges, and we imagine that $(a, b)$ is much heavier than the other four such edges.
The dashed lines represent spanning cycle subpaths of length $\eps w(a, b)$ each.  This graph contains two different safe $3$-paths with endpoints $s, t$, which are
$$\pi_1 = (s, a, b, t), \pi_2 = (s, c, \dots, a, b, \dots, d, t).$$
However, neither of the two cycles
$$C_1 = (s, a, c, \dots, s), C_2 = (t, b, \dots, d, t)$$
necessarily have small normalized weight.
For example, the normalized weight of $C_1$ is
$$w^*(C_1) = \frac{w(s, c) + w(s, a) + \eps w(a, b)}{\max\{w(s, c), w(s, a)\}}.$$
This quantity can be unbounded, specifically if $w(a, b) \gg \max\{w(s, c), w(s, a)\}$.

The conceptual point made by \cite{CDNS95, ENS15} is essentially that the counterexample in Figure \ref{fig:girthfail} can be avoided if we narrow our focus to paths that do not have one edge $(a, b)$ that is much heavier than the others.
Indeed, the dispersion lemma holds for safe $k$-paths whose non-spanning-cycle edge weights differ by at most a factor of $2$ \cite{CDNS95}, or even by a factor of $k$ \cite{ENS15}.
Our following warmup proof makes the conceptual point that this counterexample can also be avoided if we narrow our focus to ``monotone'' paths, in which the non-MST edge weights may differ by a lot, but they must be increasing along the path.

\subsection{Monotone Safe Paths and the Dispersion Lemma}

Recall in the following definition that we have assumed for convenience that the non-spanning-cycle edges of $H$ have distinct weights.

\begin{definition} [Monotone Safe $k$-Paths]
Let $\pi$ be a safe $k$-path, which can be partitioned into subpaths
$\pi = q_1 \circ \dots \circ q_k$
where each $q_i$ is safe for an edge $e_i$.
We say that $\pi$ is \emph{monotone} if these edges are increasing in weight, that is, $w(e_1) < \dots < w(e_k)$.
\end{definition}


\begin{lemma} [Monotone Dispersion Lemma] \label{lem:warmupdisp}
$H$ may not have two distinct monotone safe $k$-paths with the same endpoints $s, t$.
\end{lemma}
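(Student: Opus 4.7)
The plan is to mimic the dispersion-lemma proof for the unweighted Moore bounds: assume two distinct monotone safe $k$-paths $\pi \ne \pi'$ with common endpoints $s, t$, and extract from $\pi \cup \pi'$ a cycle $C$ whose normalized weight is less than $2k(1+2\eps)$, contradicting the weighted-girth hypothesis. Write $\pi = q_1 \circ \cdots \circ q_k$ with each $q_i$ safe for edge $e_i$ and $w(e_1) < \cdots < w(e_k)$, and similarly $\pi' = q'_1 \circ \cdots \circ q'_k$ with edges $e'_j$.

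I will proceed by induction on $k$, with the crux being a case analysis on the heaviest edges $e_k, e'_k$. In the easier case $e_k = e'_k$, both $q_k$ and $q'_k$ are safe for the same edge and share the endpoint $t$, so the contrapositive of Claim \ref{clm:esmatching} forces $q_k = q'_k$. Stripping this shared suffix off both paths yields distinct monotone safe $(k-1)$-paths with the same endpoints, to which I apply the induction hypothesis. In the other case $e_k \ne e'_k$, I may assume $w(e_k) > w(e'_k)$ (swap $\pi$ and $\pi'$ otherwise). Then $e_k$ is strictly heavier than every $e'_j$ (by $\pi'$'s monotonicity) and strictly heavier than every weight-$1$ spanning-cycle edge (since $w(e_k) > w(e'_k) \ge 1$). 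In particular $e_k \notin \pi'$, and it is a heaviest edge of $\pi \cup \pi'$.

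For this harder case, the edge-set symmetric difference $\pi \triangle \pi'$ lies in the cycle space of $H$---both $\pi$ and $\pi'$ are edge-simple $s$-$t$ trails, so every vertex has even degree in $\pi \triangle \pi'$---and thus contains a simple cycle $C$ through $e_k$. To bound $w(C)$, I use the safety condition $s_i \le \eps w(e_i)$ to get $w(q_i) = w(e_i) + 2s_i \le (1+2\eps) w(e_i)$ for each subpath; summing and using monotonicity yields $w(\pi) \le k(1+2\eps) w(e_k)$ and $w(\pi') \le k(1+2\eps) w(e'_k) < k(1+2\eps) w(e_k)$. Since $C \subseteq \pi \triangle \pi'$, we get $w(C) < 2k(1+2\eps) w(e_k)$; since $e_k$ is a heaviest edge of $C$, rearranging yields $w^*(C) < 2k(1+2\eps)$, contradicting the weighted girth. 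The base case $k = 1$ uses the same template: either $e_1 = e'_1$ and Claim \ref{clm:esmatching} immediately gives $\pi = \pi'$, or $e_1 \ne e'_1$ and the dispersion argument above applies directly.

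The main obstacle is structural rather than computational: the naive dispersion argument fails for arbitrary safe $k$-paths (Figure \ref{fig:girthfail}) because the cycles extracted from $\pi \cup \pi'$ can contain an uncontrollably heavy edge that the normalized-weight bound cannot absorb. Monotonicity evades this by ensuring there is a well-defined heaviest edge among all $e_i, e'_j$, which by the case split either peels off cleanly via Claim \ref{clm:esmatching} or is missed entirely by one of the two paths---giving exactly the ``unique heaviest edge'' that the cycle-extraction argument needs to pin normalized weight against. The remaining steps are direct parallels of the Moore-bound dispersion proof, with the extra factor of $(1+2\eps)$ absorbing the safe-padding spanning-cycle edges.
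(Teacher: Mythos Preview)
Your proof is correct and takes essentially the same approach as the paper: your induction (peeling off matching suffix subpaths via Claim~\ref{clm:esmatching}) is exactly equivalent to the paper's direct identification of the largest index $j$ with $q^a_j \ne q^b_j$, and the cycle extraction and weight bound are identical. One minor caveat: monotone safe $k$-paths need not be edge-simple trails (consecutive $q_i$'s can share spanning-cycle edges), but this does not affect your argument, since $e_k$ still appears with multiplicity exactly one in the closed walk $\pi \circ (\pi')^{-1}$, which is enough to extract a simple cycle through $e_k$ contained in $\pi \cup \pi'$.
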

\begin{proof}
Seeking contradiction, let $\pi^a, \pi^b$ be two distinct monotone safe $k$-paths with the same endpoints $s, t$.
Let the decomposition of $\pi^a$ into safe paths be $\pi^a = q^a_1 \circ \dots \circ q^a_k$, where each subpath $q^a_i$ is safe for the edge $e^a_i$.
We use similar notation for $\pi^b$.

Let $j$ be the largest index for which $q^a_j \ne q^b_j$.
Since the subpaths following $q^a_j, q^b_j$ are equal, $q^a_j, q^b_j$ must share an endpoint node, which we will call $y$.
By Claim \ref{clm:esmatching}, $q^a_j, q^b_j$ are safe for distinct edges $e^a_j \ne e^b_j$.
Assume without loss of generality that $w(e^a_j) > w(e^b_j)$.
By monotonicity, it follows that $e^a_j \notin \pi_b[s \leadsto y]$.
We can therefore find a cycle
$$C \subseteq \pi^a [s \leadsto y] \cup \pi^b [s \leadsto y]$$
in which $e^a_j$ is the heaviest edge.
We can bound the normalized weight of $C$ as:

\begin{figure}[h]
  \begin{center}
    	\includegraphics[width=0.4\textwidth]{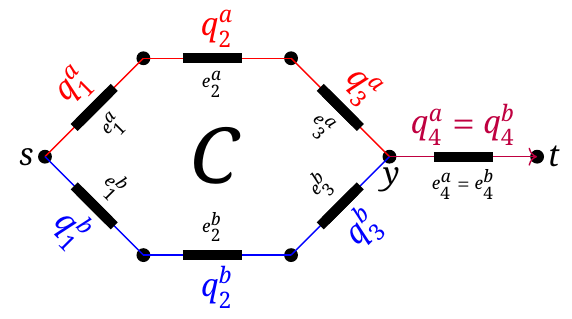}

\end{center}
\caption{The setup in our proof of the monotone dispersion lemma; each colorful segment is an edge-safe path.  The proof in this case takes $j=3$ and argues that $w^*(C) \le (1+2\eps) \cdot 6$.}
\end{figure}

\begin{align*}
w^*(C) &\le \frac{\sum \limits_{i=1}^j w(q^a_i) + w(q^b_i)}{w(e^a_j)}\\
&\le \frac{\sum \limits_{i=1}^j (1+2\eps) \left(w(e^a_i) + w(e^b_i)\right)}{w(e^a_j)}\\
&\le \frac{(j)(1+2\eps)\left( 2w(e^a_j)\right)}{w(e^a_j)}\\
&\le (1+2\eps) \cdot 2j.
\end{align*}
Since $j\le k$, this contradicts that $H$ has weighted girth $>(1+2\eps) \cdot 2k$, completing the proof.
\end{proof}

\subsection{Hiker Paths and the Counting Lemma}

The following is a famous puzzle in graph theory:

\begin{puzzle}[The Hiker Puzzle]\label{puzzle:hiker}
  We are vacationing in Graph National Park, which has $n$ landmarks (nodes) and $m$ trails (undirected edges) connecting pairs of landmarks.
Each trail has a difficulty rating (weight).
We would like to hike as many trails as possible, without repeating any trails.
However, we will get increasingly tired as we hike, and so we are only willing to hike trails in nonincreasing order of difficulty.
That is, after we hike a trail $t_i$, our next trail $t_{i+1}$ must depart from the endpoint of $t_i$ and its difficulty rating may not be higher than $t_i$'s difficulty rating.
\textbf{Prove:} If we start at the right landmark, then we can hike at least $2m/n$ trails.

\begin{center}
\begin{tikzpicture}

\coordinate (A) at (0,0);
\coordinate (B) at (3,1);
\coordinate (C) at (5,0);
\coordinate (D) at (7,2);
\coordinate (E) at (9,1);

\node [left] at (A) {\Strichmaxerl[3]};
\node [above] at (B) {\Summertree[4]};
\node [below=0.6] at (C) {\Huge \color{brown} \VarMountain};
\node [above] at (D) {\Huge \color{red} \VarFlag};
\node [right] at (E) {\Huge \color{green!50!black} \Tent};

\draw [thick] (A) -- node[above, sloped] {12} (B);
\draw [thick] (B) -- node[above, sloped] {9} (C);
\draw [thick] (C) -- node[above, sloped] {8} (D);
\draw [thick, ->] (D) -- node[above, sloped] {4} (E);

\fill (A) circle[radius=2pt];
\fill (B) circle[radius=2pt];
\fill (C) circle[radius=2pt];
\fill (D) circle[radius=2pt];

\end{tikzpicture}

~

A valid hiker path, $4$ trails long.
\end{center}
\end{puzzle}

\vfill
\begin{center}
  (This space is intentionally left free to not spoil the solution to Puzzle~\ref{puzzle:hiker}.)
\end{center}
\vfill
\clearpage

\begin{solution}
This problem, like so many others in life, is solved by asking our friends for help.
We invite friends to join our vacation until we have gathered a group of $n$ total hikers, and then we start with one hiker standing at each landmark.
Consider the trails $t = (u, v)$ in descending order of difficulty, and ask the hiker currently standing at $u$ and the hiker currently standing at $v$ to hike the trail, switching places with each other.

\begin{center}
\begin{minipage}{.3\textwidth}
\begin{tikzpicture} [scale=0.8]
\foreach \place/\name/\col in {{(0,0)/A/red}, {(1,2)/B/blue}, {(3,3)/D/yellow!80!black}, {(3, 0)/E/orange}} {
    \draw[fill=black] \place circle (0.2) node [above=0.2, color=\col] {\Strichmaxerl[3]};
}

\draw (5.3,-1) -- (5.3,5);
\end{tikzpicture}
\end{minipage}
\hfill
\begin{minipage}{.3\textwidth}
\begin{tikzpicture} [scale=0.8]

\foreach \place/\name/\col in {{(0,0)/A/blue}, {(1,2)/B/red}, {(3,3)/D/yellow!80!black}, {(3, 0)/E/orange}} {
    \draw[fill=black] \place circle (0.2) node [above=0.2, color=\col] {\Strichmaxerl[3]};
}

\draw [thick] (0, 0) -- (1, 2) node [midway, below=0.1, sloped] {hardest};
\draw [ultra thick, red, ->] (0, 0) to[bend left=20] (0.85, 1.9);
\draw [ultra thick, blue, ->] (1, 2) to[bend left=20] (0.15, 0.1);

\draw (5.3,-1) -- (5.3,5);
\end{tikzpicture}
\end{minipage}
\hfill
\begin{minipage}{.3\textwidth}
\begin{tikzpicture} [scale=0.8]
\foreach \place/\name/\col in {{(0,0)/A/orange}, {(1,2)/B/red}, {(3,3)/D/yellow!80!black}, {(3, 0)/E/blue}} {
    \draw[fill=black] \place circle (0.2) node [above=0.2, color=\col] {\Strichmaxerl[3]};
}

\draw [thick] (0, 0) -- (1, 2);
\draw [thick] (0, 0) -- (3, 0) node [below=0.15, midway, align=center] {second-hardest};


\draw [ultra thick, blue, ->] (0, 0) to[bend left=20] (2.8, 0);
\draw [ultra thick, orange, ->] (3, 0) to [bend left=20] (0.2, 0);

\end{tikzpicture}
\end{minipage}
\end{center}

In total, our $n$ hikers will hike $2m$ trails, and so there must exist a hiker who hiked a path of length at least $2m/n$.
This hiker hiked their trails in descending order of difficulty, and so their path satisfies the puzzle.
\end{solution}

We will now prove counting lemmas for monotone safe $k$-paths.
The medium and full counting lemmas from the Moore bounds generalize easily from Section \ref{sec:moore}, but the weak counting lemma needs a new idea, which is inspired by the hiker puzzle:

\begin{lemma} [Warmup Weak Counting Lemma] \label{lem:warmupwc}
If $w(H \setminus \cee) \ge \eps^{-1} kn$, then $H$ contains a monotone extra-safe $k$-path.
\end{lemma}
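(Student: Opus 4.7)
The plan is to mimic the hiker puzzle. I would place one hiker at each of the $n$ vertices of the spanning cycle $\cee$ and enumerate the non-cycle edges of $H$ in decreasing order of weight. For each edge $e = (u,v)$ and each integer $s \in \{0, 1, \ldots, \lfloor \eps w(e)/2 \rfloor\}$, the two hikers currently at positions $u-s$ and $v-s$ on $\cee$ (using cyclic labels) each walk the corresponding extra-safe subpath for $e$, thereby swapping places. After all edges are processed, each hiker's full trajectory is a walk obtained by concatenating extra-safe subpaths in decreasing edge-weight order.

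The first key step is a structural claim: no single hiker walks two extra-safe subpaths for the same edge $e$. Indeed, all swaps triggered by $e$ move hikers between the two arcs on $\cee$ given by $A_1 = \{u-S_e, u-S_e+1, \ldots, u\}$ and $A_2 = \{v-S_e, v-S_e+1, \ldots, v\}$, where $S_e = \lfloor \eps w(e)/2 \rfloor$. I would show these arcs are disjoint by applying the weighted girth condition to the cycle consisting of $(u,v)$ together with either spanning-cycle arc between $u$ and $v$: both cyclic distances between $u$ and $v$ strictly exceed $(t-1) w(e)$, which is $> 2 S_e$ for $\eps < 1$. Since every swap for $e$ exchanges a hiker in $A_1$ with a hiker in $A_2$, each hiker occupies at most one position in $A_1 \cup A_2$ at any time and hence participates in at most one such swap.

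Next, I count total events. Each edge contributes $2(\lfloor \eps w(e)/2 \rfloor + 1) \ge \eps w(e)$ hiker-events, so the grand total is at least $\eps \cdot w(H \setminus \cee) \ge kn$ by hypothesis. By pigeonhole, some hiker participates in at least $k$ swaps, so their trajectory concatenates $\ge k$ extra-safe subpaths for distinct non-cycle edges in strictly decreasing weight order (using the distinct-weights tiebreaking convention). Reversing the trajectory and truncating to $k$ consecutive subpaths yields a monotone extra-safe $k$-path.

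The main obstacle is pinning down the arc-disjointness claim cleanly: one has to verify that the weighted girth rules out overlap in both the forward and wraparound directions around $\cee$, using that every edge in $H$ has weight $\ge 1$ together with Lemma \ref{lem:maxwt} to control $S_e$ relative to the cyclic distances. A smaller subtlety is that the backward suffix of one subpath in the hiker's trajectory may retrace the forward prefix of the next; if safe $k$-paths are required to be edge-simple, one can patch this by shrinking the $s$ parameters of consecutive subpaths until the overlap vanishes, which only decreases the $s$ values and hence preserves extra-safety.
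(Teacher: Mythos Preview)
Your proposal is correct and follows essentially the same hiker argument as the paper. The only cosmetic differences are that you process edges in decreasing weight order and then reverse (the paper processes in increasing order directly), and you give a more explicit proof of the arc-disjointness claim that the paper dispatches in a one-line footnote via Lemma~\ref{lem:maxwt}. Your worry about edge-simplicity is unnecessary in this paper's framework: safe $k$-paths are not required to be edge-simple, so backtracking between consecutive subpaths is allowed.
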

\begin{proof}
Start by placing a hiker at each node of $H$.
Then, consider the non-spanning-cycle edges of $H$ in order of increasing weight.
When an edge $(u, v)$ is considered, for each path $\pi$ that is extra-safe for $(u, v)$, we ask the two hikers at either endpoint of $\pi$ to hike $\pi$, thus switching places with each other.\footnote{A technical detail here is that each hiker walks at most one path $\pi$ per edge $(u, v)$.  This follows easily from Lemma \ref{lem:maxwt}, establishing that $w(u, v) < n/2$ and so any two extra-safe paths for $(u, v)$ have disjoint endpoints.}


There are at least $\eps w(u, v)/2$ paths that are extra-safe for each edge $(u, v)$, and two hikers hike each such path (one in each direction).
Thus, after all non-spanning-cycle edges of $H$ are considered, in total our $n$ hikers have hiked at least $\eps w(H \setminus \cee) \ge kn$ extra-safe paths, and the path walked by each hiker is a monotone extra-safe path.
Thus, there exists a hiker who hiked a monotone extra-safe path of length at least $k$.
\end{proof}

The bootstrapping process from the weak to the medium and full counting lemmas essentially works exactly as in the Moore bounds, with minor adjustments.

\begin{lemma} [Warmup Medium Counting Lemma] \label{lem:warmupmed}
$H$ contains at least $\Theta(\eps) \cdot \left( w(H \setminus \cee) - \eps^{-1} kn\right)$ monotone safe $k$-paths.
\end{lemma}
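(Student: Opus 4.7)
The plan is to extend the hiker argument from the warmup weak counting lemma (Lemma \ref{lem:warmupwc}) so that it produces, in one shot, not just a single monotone safe $k$-path, but one for every length-$k$ window of every hiker's walk. Recall the setup: we place a hiker at each of the $n$ nodes, process the non-spanning-cycle edges in order of nondecreasing weight, and for each extra-safe path $\pi$ of the current edge $(u,v)$ we swap the hikers standing at $\pi$'s two endpoints by having them walk $\pi$. Let $\ell_i$ denote the length (in edge-safe subpaths) of hiker $i$'s resulting walk. The calculation inside the weak counting lemma already gives $\sum_i \ell_i \ge \eps \cdot w(H \setminus \cee)$, and each hiker's walk is monotone because edges are processed in nondecreasing weight order.

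\textbf{Step 1: candidate paths.} Hiker $i$'s walk decomposes into $\ell_i$ consecutive edge-safe subpaths $q_{(1)}, q_{(2)}, \dots, q_{(\ell_i)}$ with strictly increasing associated edge weights. Any $k$ consecutive subpaths $q_{(j)}, \dots, q_{(j+k-1)}$ concatenate into a monotone extra-safe $k$-path, and hence in particular a monotone safe $k$-path. Thus hiker $i$ produces $\max(0, \ell_i - k + 1)$ candidates.

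\textbf{Step 2: distinctness of candidates.} Within a single hiker's walk, two windows starting at different positions use disjoint subsets of that hiker's subpath list and in particular have different minimum- and maximum-weight non-spanning-cycle edges, so the resulting paths in $H$ differ. Across different hikers, suppose two of them produced the same monotone safe $k$-path $\pi = q_1 \circ \cdots \circ q_k$ with edges $w(e_1) < \cdots < w(e_k)$. Then both hikers would have to walk $q_1$ starting at the common starting node $x_0$ of $\pi$. But the hiker-swap process maintains the invariant that exactly one hiker occupies each node at every moment in time, so only the unique hiker located at $x_0$ when $e_1$ is processed is the one who can walk $q_1$ from $x_0$---contradicting that two distinct hikers both produced $\pi$.

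\textbf{Step 3: assembling the bound.} Combining the two steps,
\[
\#\{\text{monotone safe } k\text{-paths in } H\} \;\ge\; \sum_{i=1}^{n} \max(0, \ell_i - k + 1) \;\ge\; \sum_{i=1}^{n} \ell_i - kn \;\ge\; \eps \cdot w(H \setminus \cee) - kn,
\]
which equals $\eps \cdot \left(w(H \setminus \cee) - \eps^{-1} kn\right)$, as required.

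The step I expect to be the main obstacle is the distinctness argument of Step 2---specifically, ruling out the possibility that two different hikers, after being shuffled around by edges processed between $e_1, \dots, e_k$, could both ``claim'' the same monotone safe $k$-path. The one-hiker-per-node invariant, together with the uniqueness of the time at which $\pi$'s minimum-weight edge $e_1$ is processed, is exactly the structural feature that lets this be ruled out. Note that we do \emph{not} need every monotone safe $k$-path to appear as some hiker's window---intermediate swaps may scatter hikers and prevent this---only that the windows which do appear are pairwise distinct paths in $H$.
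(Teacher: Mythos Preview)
Your argument is correct and gives the stated bound, but it follows a genuinely different route from the paper. The paper treats the weak counting lemma as a black box and runs an iterative ``find--record--delete'' loop: it locates one monotone extra-safe $k$-path $\pi = q_1 \circ \dots \circ q_k$, then records $\Theta(\eps w(e_1))$ monotone \emph{safe} $k$-paths by padding every $q_i$ with $s$ extra \forward{}/\backward{} spanning-cycle steps for each $0 \le s \le \eps w(e_1)/2$, deletes $e_1$, and repeats until the remaining weight drops below $\eps^{-1}kn$. Your proof instead unpacks the hiker process and harvests every length-$k$ window of every hiker's walk in a single pass, with distinctness coming from the one-hiker-per-node invariant rather than from the deletion of $e_1$.

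What each approach buys: the paper's method is modular---it mirrors the Moore-bound template exactly and reuses verbatim in the full proof (Section~\ref{sec:mainproof}), where the ``extra-safe $\to$ safe'' padding trick is what turns one path into many. Your approach is more direct and yields an explicit constant ($\eps$ rather than $\Theta(\eps)$), but the paths you produce are all \emph{extra}-safe, so you never exploit the slack between extra-safe and safe. One small wording issue: in Step~2 you say overlapping windows ``use disjoint subsets of that hiker's subpath list,'' which is false for overlapping windows; what you actually need (and immediately state) is that they have different minimum-weight non-spanning-cycle edges, which follows from strict monotonicity of the $e_{(j)}$.
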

\begin{proof}
Repeat the following process until no longer possible.
Find a monotone extra-safe $k$-path $\pi$, with decomposition
$$\pi = q_1 \circ \dots \circ q_k$$
where each $q_i$ is extra-safe for some edge $e_i$.
Notice that, for any integer $0 \le s \le \eps w(e_1)/2$, we can record a (not-necessarily-extra-)safe $k$-path by modifying $\pi$ by adding $s$ additional \forward{} spanning cycle edges to the start of every path $q_i$, and also adding $s$ additional \backward{} spanning cycle edges to the end of every path $q_i$.
We record $\Theta(\eps w(e_1))$ monotone safe $k$-paths in this way.

We then delete the edge $e_1$, to ensure that we do not re-record any of these paths in a future round.
By the weak counting lemma, we may repeat this process at least until $w(H \setminus \cee) < \eps^{-1} kn$.
It follows that we will record at least $\Theta(\eps) \cdot \left(w(H \setminus \cee) - \eps^{-1} kn\right)$ monotone safe $k$-paths before halting.
\end{proof}

\begin{lemma} [Warmup Full Counting Lemma] \label{lem:warmupfc}
Let $d := w(H \setminus \cee) / n$.
If $d \ge 2k \eps^{-1}$, then $H$ contains at least $kn \cdot \Omega(\eps d/k)^k$ monotone safe $k$-paths.
\end{lemma}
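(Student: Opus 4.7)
The plan is to mirror the proof of the unweighted full counting lemma (Lemma \ref{lem:moorefc}), amplifying the medium counting lemma via a random-subgraph argument. Since all of the weight and all of the freely-chosen edges live outside $\cee$, I would form a random subgraph $H'$ of $H$ by retaining every spanning cycle edge with probability $1$, and retaining every edge in $H \setminus \cee$ independently with probability $q := 2k\eps^{-1}/d$. Note $q \le 1$ because of the hypothesis $d \ge 2k\eps^{-1}$.

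The first thing to check is that $H'$ still satisfies the hypotheses of the medium counting lemma: the spanning cycle $\cee$ survives unchanged as a unit-weight spanning cycle, and weighted girth is monotone under edge deletion, so $H'$ still has weighted girth $>(1+2\eps)\cdot 2k$. Weights and weight-orderings of surviving edges are identical in $H$ and $H'$, so a surviving monotone safe $k$-path in $H$ is a monotone safe $k$-path in $H'$, and conversely every monotone safe $k$-path in $H'$ is one in $H$.

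Now let $p, p'$ be the number of monotone safe $k$-paths in $H, H'$ respectively. Every monotone safe $k$-path uses exactly $k$ non-spanning-cycle edges (sampled independently) together with some number of spanning cycle edges (which are kept for free), so
\[
\ee[p'] = p \cdot q^k.
\]
On the other hand, applying Lemma \ref{lem:warmupmed} to $H'$ and taking expectations gives
\[
\ee[p'] \;\ge\; \Theta(\eps) \cdot \bigl(\ee[w(H' \setminus \cee)] - \eps^{-1}kn\bigr) \;=\; \Theta(\eps)\cdot\bigl(2kn\eps^{-1} - kn\eps^{-1}\bigr) \;=\; \Theta(kn),
\]
since $\ee[w(H' \setminus \cee)] = q \cdot nd = 2kn\eps^{-1}$. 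Combining the two estimates and rearranging yields $p \ge \Theta(kn)/q^k = kn \cdot \Omega(\eps d/k)^k$, as desired.

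The one subtlety I expect to be worth flagging is the application of the medium counting lemma inside the expectation: its stated lower bound can be negative on outcomes where $w(H'\setminus \cee) < \eps^{-1}kn$, so one really has $p' \ge \max\{0,\ \Theta(\eps)(w(H'\setminus\cee) - \eps^{-1}kn)\}$. Since $\max\{0,\cdot\}$ is convex, Jensen preserves the direction of the inequality and the expected lower bound $\Theta(kn)$ goes through. Everything else is a routine adaptation of the Moore-bound template.
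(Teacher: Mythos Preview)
Your proposal is correct and follows essentially the same route as the paper: the same random subgraph (keep $\cee$ deterministically, keep each non-$\cee$ edge with probability $2k\eps^{-1}/d$), the same survival probability $q^k$ using that the $k$ non-spanning-cycle edges are distinct by monotonicity, and the same application of the medium counting lemma in expectation. One minor remark: your Jensen step is unnecessary, since the medium counting lemma's lower bound $\Theta(\eps)(w(H'\setminus\cee)-\eps^{-1}kn)$ is valid pointwise even when negative, so linearity of expectation already suffices.
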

\begin{proof}
Let $H'$ be a random edge-subgraph of $H$ obtained by keeping the spanning cycle $\cee$ deterministically, and keeping each non-spanning-cycle edge independently with probability $2k\eps^{-1} / d$.
Let $p, p'$ be the number of monotone safe $k$-paths in $H, H'$, respectively.
On one hand, monotonicity implies that any monotone safe $k$-path $\pi$ uses $k$ \emph{distinct} non-spanning-cycle edges.
Thus, the probability that $\pi$ survives in $H'$ is $\Theta(k\eps^{-1} / d)^{k}$, and so
$$\mathbb{E}[p'] = p \cdot \Theta\left(\frac{k\eps^{-1}}{d}\right)^{k}.$$
On the other hand, we have
\begin{align*}
\mathbb{E}[p'] &\ge \mathbb{E}\left[\Theta(\eps) \cdot \left( w(H' \setminus \cee) - \eps^{-1} kn \right)\right] \tag*{Medium Counting Lemma}\\
&= \Theta(\eps) \cdot \left(\mathbb{E}\left[ w(H' \setminus \cee)\right] - \eps^{-1} kn\right)\\
&= \Theta(\eps) \cdot \left( w(H \setminus \cee) \cdot \frac{2k \eps^{-1}}{d} - \eps^{-1} kn\right)\\
&= \Theta(\eps) \cdot \left( 2kn \eps^{-1} - \eps^{-1} kn\right)\\
&= \Theta(\eps) \cdot \left( \eps^{-1} kn\right)\\
&= \Theta(kn).
\end{align*}
Comparing the two previous bounds on $\mathbb{E}[p']$, we get
$$\Theta(kn) \le \mathbb{E}[p'] \le p \cdot \Theta\left(\frac{k \eps^{-1}}{d}\right)^{k}.$$
Rearranging this inequality gives our desired inequality of
\begin{align*}
p \ge kn \cdot \Theta\left( \frac{d}{k \eps^{-1}} \right)^k. \tag*{\qedhere}
\end{align*}
\end{proof}

We are now ready to complete the proof of Theorem \ref{thm:warmupmain}.
Let $d := w(H)/n$.
If $d < 2k \eps^{-1}$, then we have $w(H) = O(kn \eps^{-1})$ and we are done.
Otherwise, if $d \ge 2k \eps^{-1}$, then we may apply the full counting lemma to say that $H$ has $kn \cdot \Omega(\eps d/k)^k$ monotone safe $k$-paths.
Meanwhile, the dispersion lemma implies that $H$ has $O(n^2)$ such paths.
Comparing these estimates, we get
$$kn \cdot \Omega\left(\frac{d}{k\eps^{-1}}\right)^k \le O(n^2).$$
Rearranging terms in this inequality to isolate $d$, we get
$$d \le O\left(\eps^{-1} k^{(k-1)/k} n^{1/k}\right) = O\left(\eps^{-1} k n^{1/k} \right),$$
and thus $w(H) = w(H \setminus \cee) + w(\cee) = nd + n = O\left( \eps^{-1} k n^{1+1/k} \right)$.

\section{Full Proof: Light Spanners via Bucket-Monotone Paths \label{sec:mainproof}}

We are now ready to prove our main theorem:

\begin{theorem} [Main Theorem] \label{thm:main}
Let $\eps>0$, let $k, n$ be positive integers, and let $H$ be an $n$-node graph with a unit-weight spanning cycle $\cee$ and weighted girth $>(1+4 \eps) \cdot 2k$.
Then
$$w(H) = O\left( \eps^{-1} n^{1+1/k} \right).$$
\end{theorem}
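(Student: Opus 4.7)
The plan is to mirror the warmup proof in Section \ref{sec:warmup}, replacing strict weight-monotonicity of safe $k$-paths with a ``bucket-monotone'' relaxation suggested by the section title. First, partition the non-spanning-cycle edges of $H$ into buckets by geometric weight class, say $B_j := \{e : w(e) \in [(1+\eps)^j, (1+\eps)^{j+1})\}$, so that edges in the same bucket differ in weight by at most a factor of $(1+\eps)$; by Lemma \ref{lem:maxwt} there are only $O(\eps^{-1} \log n)$ nonempty buckets. Define a safe $k$-path $\pi = q_1 \circ \cdots \circ q_k$ (with $q_i$ safe for $e_i$) to be \emph{bucket-monotone} if the bucket indices $b(e_1) \le \cdots \le b(e_k)$ are nondecreasing, even though within a bucket the individual weights may vary freely. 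This is a strictly weaker condition than the warmup's monotonicity, so bucket-monotone paths are more numerous.

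The dispersion lemma should carry over from Lemma \ref{lem:warmupdisp} with only a small adjustment. If $\pi^a, \pi^b$ are two distinct bucket-monotone safe $k$-paths sharing endpoints, let $j$ be the largest index of disagreement and form the cycle $C \subseteq \pi^a[s \leadsto y] \cup \pi^b[s \leadsto y]$ as in the warmup proof. Because edges in the same bucket differ in weight by at most $(1+\eps)$, bucket-monotonicity still yields $w(e_i^a), w(e_i^b) \le (1+\eps) \cdot w(e_j^a)$ for $i \le j$. Combined with the $(1+2\eps)$ slack from safety, this gives
$$w^*(C) \;\le\; (1+2\eps)(1+\eps) \cdot 2k \;\le\; (1+4\eps) \cdot 2k,$$
contradicting the weighted girth assumption. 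The extra $(1+\eps)$ factor, compared with the warmup, precisely accounts for the weakened weighted girth threshold $(1+4\eps)\cdot 2k$ in the theorem statement.

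The main obstacle is the counting lemmas. The warmup's full counting lemma produces only $kn \cdot \Omega(\eps d / k)^k$ monotone safe $k$-paths, which when combined with the dispersion bound gives the suboptimal $d \le O(\eps^{-1} k \cdot n^{1/k})$. To shave the factor of $k$, I need a bucket-monotone counting bound of the form $n \cdot \Omega(\eps d)^k$, which in turn requires the weak counting lemma's threshold to drop from $\eps^{-1} kn$ to $\eps^{-1} n$. The intuition is that bucket-monotonicity should allow a hiker-style argument to be run bucket-by-bucket in order of bucket index, with hikers free to reorder walks within a single bucket. A single hiker whose walks cross $k$ distinct buckets then automatically traces out a bucket-monotone $k$-path, so the global threshold for producing one such path only needs to scale with the number of hikers $n$ rather than with $kn$. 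Assuming this improved weak counting lemma can be established, the medium and full counting lemmas propagate by the same bootstrapping as in Lemmas \ref{lem:warmupmed} and \ref{lem:warmupfc} (with subsampling probability $2\eps^{-1}/d$ rather than $2k\eps^{-1}/d$), yielding the desired $n \cdot \Omega(\eps d)^k$ lower bound on bucket-monotone safe $k$-paths. Comparing with the $O(n^2)$ dispersion upper bound and rearranging gives $d \le O(\eps^{-1} n^{1/k})$, hence $w(H) = nd + n = O(\eps^{-1} n^{1+1/k})$, as required.
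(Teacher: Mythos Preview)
Your proposal has a genuine gap: relaxing the ordering constraint from strictly monotone to bucket-monotone, while keeping the same edge-safe building blocks $q_i$ with budget $s \le \eps w(e_i)$, does \emph{not} improve the weak counting lemma. In the hiker argument, the total number of edge-safe subpaths traversed is determined by how many such subpaths exist per edge, namely $\Theta(\eps w(e))$, and this is unaffected by whether you later insist on monotone or merely bucket-monotone order. Summing over edges still gives $\Theta(\eps)\cdot w(H\setminus\cee)$ total hiker-steps among $n$ hikers, so finding a length-$k$ path still requires $w(H\setminus\cee)\ge \Theta(\eps^{-1}kn)$. Your sentence ``the global threshold \ldots only needs to scale with $n$ rather than $kn$'' is therefore not justified, and without it the rest of the bootstrapping reproduces exactly the warmup bound $O(\eps^{-1}kn^{1+1/k})$.

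The paper saves the factor of $k$ by a different mechanism: it changes the \emph{building blocks themselves}. A ``bucket-safe'' path for bucket $B_i=[2^i,2^{i+1})$ may contain several $B_i$-edges and is allotted up to $\eps k\,2^i$ spanning-cycle edges on each side---a factor $k$ larger budget than your $\eps w(e)$. This is what drives the weak counting threshold down to $\Theta(\eps^{-1}n)$: each edge $(u,v)\in B_i$ is now hiked by $\Theta(\eps k\,2^i)=\Theta(\eps k\,w(u,v))$ hikers, so the total hiker-steps become $\Theta(\eps k)\cdot w(H\setminus\cee)$. The dispersion lemma still goes through because these per-bucket budgets \emph{telescope}: $\sum_{i\le j}\eps k\,2^{i+1}<\eps k\,2^{j+3}=O(\eps k\,w(e^*))$, contributing only $O(\eps k)$ to $w^*(C)$. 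Note that this telescoping needs constant-ratio buckets; your $(1+\eps)$-ratio buckets would give $\sum_{i\le j}\eps k(1+\eps)^i\approx k(1+\eps)^j$, an $\eps^{-1}$ blow-up that kills the dispersion bound. So both your bucket width and your notion of ``safe'' need to change.
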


By the reductions in Section \ref{sec:priorwork}, this implies Theorem \ref{thm:sotalight}.
As in the warmup proof, we use $4\eps$ rather than $\eps$ in the weighted girth purely for convenience, and we will arbitrarily define a \forward{} and \backward{} direction around the spanning cycle.

\subsection{Bucket-Safe Paths and Full Proof Intuition}

Our full proof will save a factor of $k$ in lightness over the previous warmup proof.
Before proceeding to this main proof, let us give some intuition on where exactly the previous proof is losing a factor of $k$.
We will consider two extreme opposing cases for the structure of $H$, and see how our previous proof loses this $k$-factor in either one.
In the following discussion we define the \emph{bucket} $B_i \subseteq E(H)$ to be the set of edges in $H \setminus \cee$ with weights in the range $[2^i, 2^{i+1})$.

\paragraph{Special Case 1: Well-Separated Edge Weights.} Let us imagine that we are in an extreme special case in which no two edges in $H \setminus \cee$ lie in the same bucket; in particular, any two edges $e_1, e_2$ have weights $w(e_1), w(e_2)$ that differ by a factor of at least $2$.
In this case, the $k$-factor loss from our previous argument is because we are too stingy in our definition of edge-safe paths.
We allowed only $s \le \eps w(e)$ spanning cycle edges before and after each spanning cycle edge $e$, but in fact our dispersion lemma would still hold if we allowed approximately $\eps k w(e)$ such edges.
This is because the separation of edge weights implies that the total number of non-spanning-cycle edges will \emph{telescope} over the monotone path; e.g., revisiting the proof of Lemma \ref{lem:warmupdisp} there will still be only $O(\eps k w(e_j^a))$ in the cycle $C$ that we analyze.
These ultimately cost only a negligible $+O(\eps k)$ term in the bound on weighted girth.
This leads to our first piece of intuition on how to improve the argument:

\begin{center}
\emph{We can budget $O(\eps k 2^i)$ spanning cycle edges per bucket $B_i$, rather than $O(\eps w(e))$ spanning cycle edges per edge $e$.}
\end{center}

\paragraph{Special Case 2: Bucketed Edge Weights.} Let us now imagine the opposite special case, in which all edges in $H \setminus \cee$ lie in the same bucket $B_i$.
In this case, as discussed earlier, \emph{monotonicity} is not needed in the proof; we can use safe $k$-paths that use edges from $B_i$ in any order.
For these (not-necessarily-monotone) safe $k$-paths, the weak counting lemma can be improved by a factor of $k$: they are guaranteed to exist once $w(H \setminus \cee) \ge \eps^{-1} n$, rather than $w(H \setminus \cee) \ge \eps^{-1} kn$.
This leads to our second piece of intuition on how to gain a $k$-factor:
\begin{center}
\emph{The non-spanning-cycle edges within any bucket $B_i$ do not need to occur in monotone order.}
\end{center}

\paragraph{Bucket-Safe Paths.}

Our full proof will use \emph{bucket-safe paths}, which replace the \emph{edge-safe paths} from the previous warmup proof:

\begin{definition} [Bucket-Safe Paths]
A path $\pi$ in $H$ is safe for bucket $B_i$ if it is non-backtracking (meaning that it does not repeat any edge twice in a row), all of its non-spanning-cycle edges are in $B_i$, and for some integer $0 \le s \le \eps k 2^i$ it contains exactly $2s$ spanning cycle edges, where the first $s$ are used in the \forward{} direction and the last $s$ are used in the \backward{} direction.
We say that $\pi$ is extra-safe if in fact $s \le \eps k 2^{i-1}$.
\end{definition}

We note that any empty (single-node) path is considered to be safe for every bucket.
Bucket-safe paths incorporate both pieces of intuition: they allow the edges within bucket $B_i$ to occur in any order, and they budget $\eps k 2^i$ non-spanning-cycle edges for this bucket.
The following technical claim extends Claim \ref{clm:esmatching} in the natural way to bucket-safe paths, and it has essentially the same proof:
\begin{claim} \label{clm:bsmatching}
Let $q, q'$ be bucket-safe paths in $H$, let $\sigma, \sigma'$ be the sequences of non-spanning-cycle edges used by $q, q'$ respectively, and suppose that $q, q'$ share an endpoint node $y$.
If $q \ne q'$, then $\sigma \ne \sigma'$.
\end{claim}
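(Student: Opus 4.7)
The plan is to mirror Claim~\ref{clm:esmatching}, proving the contrapositive: if $\sigma = \sigma'$, then $q = q'$. Bucket-safe paths are closed under reversal (reversing swaps each forward cycle edge with a backward one, preserving both the ``first $s$ forward, last $s$ backward'' structure and the non-backtracking condition), so WLOG both $q$ and $q'$ start at $y$. The central budget tool is Lemma~\ref{lem:maxwt}: with weighted girth $> (1+4\eps)\cdot 2k$, every edge has weight $< n / (2((1+4\eps)\cdot 2k - 1))$, so edges in $B_i$ force $2^i$ to be small enough that $s \le \eps k \cdot 2^i$ is at most roughly $\eps n / 4$. Thus no cycle subwalk inside a bucket-safe path wraps around the spanning cycle $\cee$.

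With these reductions in place, I induct on $j$, showing that the prefixes of $q$ and $q'$ through the $j$-th non-cycle edge coincide. In the inductive step, both paths rest at a common node $u$ after $e_{j-1}$, and each then traverses a cycle gap to some endpoint of $e_j$. Two structural observations drive the analysis: a single gap must be entirely forward or entirely backward (otherwise the within-gap direction reversal would force a backtracking step on the two consecutive cycle edges where the switch happens), and by the bucket-safe structure all forward gaps precede all backward gaps in the overall path. Any disagreement between $q$ and $q'$ on this gap thus falls into one of three cases.

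Resolving the cases boils down to comparing the weighted-girth budget to $s$. If the two gaps use the same cycle direction but reach different endpoints of $e_j$, their lengths differ by a cycle distance between the two endpoints of $e_j$, which by weighted girth of $H$ (applied to the small cycle through $e_j$) exceeds $((1+4\eps)\cdot 2k - 1) \cdot 2^i$, contradicting the cap $2s \le 2\eps k \cdot 2^i$ as soon as $\eps < 1$. If the gaps use opposite directions but reach the same endpoint, their lengths sum to $n$, contradicting $2s < n$. If they disagree on both direction and endpoint, a hybrid arithmetic argument forces either $c + c' = L$ or $c + c' = n - L$ (where $L$ is the shorter cycle path between the endpoints of $e_j$), both of which similarly violate the $s$-budget. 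Hence the gap is forced to match across $q$ and $q'$, and the induction propagates.

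I expect the main obstacle to be that third ``mixed'' case, together with the bookkeeping around the forward-to-backward transition gap, where $q$ and $q'$ might disagree on \emph{when} the switch occurs. Fortunately, the same small collection of weighted-girth vs.\ $s$-budget inequalities suffice to handle all such disagreements uniformly, so once the setup is in place the proof reduces to a routine computation tying together $2^i$, $s$, $\eps$, $k$, and $n$.
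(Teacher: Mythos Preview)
Your approach is correct, but it is substantially more involved than the paper's. The paper dispenses with the induction and the three-way case split via a single observation: from Lemma~\ref{lem:maxwt} (with $t=(1+4\eps)\cdot 2k$) the maximum edge weight is $W<n/(4\eps k)$, so any bucket-safe path contains at most $2s\le 2\eps k W<n/2$ spanning-cycle edges in total. Since each maximal spanning-cycle subwalk is non-backtracking, it is monodirectional on $\cee$; having length $<n/2$, it must therefore be the unique shorter of the two arcs between its endpoints. Thus every gap between consecutive non-cycle edges, and the final gap to $y$, is pinned down by its endpoints alone; this determines the suffix from $u_1$ to $y$, and the initial prefix is then recovered from the forward/backward balance condition (equal numbers of each overall). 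No weighted-girth lower bound on the arc between the endpoints of a single $e_j$, and no direction-mixing arithmetic, is needed.

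Your route does buy one thing: by explicitly handling the ``same direction, different endpoint of $e_j$'' case, you confront head-on the possibility that $q$ and $q'$ traverse some $e_j$ in opposite directions---an ambiguity the paper's writeup leaves implicit in its choice of labeling $(u_i,v_i)$. If one reads $\sigma$ as a sequence of \emph{undirected} edges, your weighted-girth bound on the cycle distance between the endpoints of $e_j$ is precisely what resolves this. Two small omissions in your sketch: your induction stops at the last non-cycle edge and never treats the trailing gap to the far endpoint (easy: once the prefixes through $e_m$ agree, the forward/backward counts used so far agree, forcing $s_q=s_{q'}$ and hence identical final backward runs); and your claimed restriction ``$\eps<1$'' is unnecessary, since with girth $>(1+4\eps)\cdot 2k$ the inequality $((1+4\eps)2k-1)>2\eps k$ holds for all $\eps>0$.
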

\begin{proof}
We will prove the contrapositive.
Suppose that
$$\sigma = \sigma' =: \left((u_1, v_1), \dots, (u_j, v_j)\right).$$
For all pairs of nodes $v_i, u_{i+1}$, and also for $v_j, y$, there are two possible spanning cycle paths between these nodes (going around $\cee$ in either direction), and at least one of these paths must use $\ge n/2$ spanning cycle edges.
By Lemma \ref{lem:maxwt} the maximum edge weight in $H$ is $W < \frac{n}{4 \eps k}$, and thus any bucket-safe path uses at most $2 \eps k W < n/2$ spanning cycle edges.
Thus $q, q'$ must choose the same spanning cycle paths between each of these pairs of nodes.

This shows that $q, q'$ are identical on their suffix from $u_1$ to $y$.
In particular $q, q'$ use the same number of \forward{} and \backward{} steps following $u_1$.
By definition bucket-safe paths must use the same number of \forward{} and \backward{} steps overall, which implies that they also use the same spanning cycle paths as prefixes from their start node up to $u_1$. This implies $q=q'$.
\end{proof}

Our plan is to use bucket-safe paths as building blocks to construct \emph{bucket-monotone paths}, which are formed by concatenating bucket-safe paths in monotonic order of bucket weight, exactly like we previously constructed monotone safe paths by concatenating edge-safe paths.

\subsection{Bucket-Monotone Paths and the Dispersion Lemma}

We focus on the following paths in our proof:
\begin{definition} [Bucket-Monotone Safe $k$-Paths]
A path $\pi$ in $H$ is a bucket-monotone safe $k$-path if it has exactly $k$ non-spanning-cycle edges in total, and it can be partitioned into (possibly empty) subpaths $\pi = q_0 \circ \dots \circ q_j$, where each subpath $q_i$ is safe for bucket $B_i$ (and so these bucket weights are increasing along $\pi$).
We say that $\pi$ is extra-safe if each subpath $q_i$ is extra-safe for $B_i$.
\end{definition}

We note that, although each individual bucket-safe path $q_i$ is non-backtracking, a bucket-monotone safe $k$-path may backtrack, e.g.\ when edges from $q_{i+1}$ backtrack edges from $q_i$.
This will be used in the counting lemma.
The dispersion lemma for bucket-monotone safe $k$-paths is very similar to the one from the warmup proof.

\begin{lemma} [Dispersion Lemma] \label{lem:fulldisp}
$H$ may not have two distinct bucket-monotone safe $k$-paths with the same endpoints $s, t$.
\end{lemma}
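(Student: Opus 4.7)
The plan is to follow the template of the warmup dispersion lemma (Lemma~\ref{lem:warmupdisp}): assume for contradiction that $H$ has two distinct bucket-monotone safe $k$-paths $\pi^a, \pi^b$ sharing endpoints $s, t$, and extract a cycle $C$ of normalized weight $\leq (1+4\eps) \cdot 2k$, contradicting the weighted girth assumption. Padding with empty bucket-safe subpaths as needed, decompose $\pi^a = q^a_0 \circ \cdots \circ q^a_m$ and $\pi^b = q^b_0 \circ \cdots \circ q^b_m$, and let $j^*$ be the largest index with $q^a_{j^*} \neq q^b_{j^*}$. Since all strictly later subpaths agree, $q^a_{j^*}$ and $q^b_{j^*}$ must share their terminal node, which I will call $y$.

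By Claim~\ref{clm:bsmatching}, the non-MST edge sequences of $q^a_{j^*}$ and $q^b_{j^*}$ must differ, and by bucket monotonicity the remaining subpaths of $\pi^a[s \leadsto y]$ and $\pi^b[s \leadsto y]$ use non-MST edges only from strictly lower buckets. Thus some $B_{j^*}$-edge appears in one of the two walks with a multiplicity differing from its multiplicity in the other, and the standard parity argument on the closed walk $\pi^a[s\leadsto y] \cdot (\pi^b[s \leadsto y])^{-1}$ extracts a simple cycle $C \subseteq \pi^a[s\leadsto y] \cup \pi^b[s\leadsto y]$ that contains a non-MST edge $e^* \in B_{j^*}$. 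In particular $w_{\max}(C) \geq w(e^*) \geq 2^{j^*}$.

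The crux of the proof is a sharper bound on $w(C)$ than the warmup template would give directly. The warmup counted each bucket-safe subpath's weight in terms of the specific edge it was safe for, but now there may be many non-MST edges per bucket, and bounding each of their weights by the bucket ceiling $2^{i+1}$ would cost an extra factor of $2$. Instead, I bound the non-MST weight of $C$ by $|C \cap (H \setminus \cee)| \cdot w_{\max}(C)$, using only the trivial estimate $w(e) \leq w_{\max}(C)$. Since each of $\pi^a, \pi^b$ uses exactly $k$ non-MST edges and the simple cycle $C$ uses each edge at most once, $|C \cap (H \setminus \cee)| \leq 2k$, so the non-MST part of $w^*(C)$ is at most $2k$. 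For the MST contribution, each $q^a_i, q^b_i$ uses at most $2\eps k \cdot 2^i$ unit-weight spanning cycle edges, so the total MST weight in $C$ is at most $2 \sum_{i=0}^{j^*} 2 \eps k \cdot 2^i < 8\eps k \cdot 2^{j^*}$; dividing by $w_{\max}(C) \geq 2^{j^*}$ contributes at most $8 \eps k$. Summing, $w^*(C) \leq 2k + 8 \eps k = (1+4\eps) \cdot 2k$, as required.

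The step I expect to be most delicate is confirming that the extracted cycle $C$ contains a non-MST edge from $B_{j^*}$. Claim~\ref{clm:bsmatching} guarantees only that the sequences $\sigma^a_{j^*}, \sigma^b_{j^*}$ differ, and in principle they could permute the same multi-set of $B_{j^*}$-edges, in which case those edges would cancel out of the symmetric difference and the heaviest edge of $C$ might land in a strictly lower bucket (where the MST estimate above no longer suffices). Ruling this out—perhaps by appealing to the non-backtracking property of bucket-safe paths to forbid order-permutations with the same endpoints, or by redefining $j^*$ to be the largest index where the non-MST edge multi-sets disagree—is the step that will require somewhat more care than the warmup proof.
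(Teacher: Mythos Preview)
Your proposal is essentially the paper's own proof, step for step: pick the last index $j$ with $q^a_j\ne q^b_j$, let $y$ be their shared endpoint, invoke Claim~\ref{clm:bsmatching} to get $\sigma^a_j\ne\sigma^b_j$, extract a cycle $C\subseteq\pi^a[s\leadsto y]\cup\pi^b[s\leadsto y]$ containing a $B_j$-edge, and bound $w^*(C)$ by charging at most $2k$ non-spanning-cycle edges (each $\le w(e^*)$) plus at most $2\sum_{i\le j}2\eps k\,2^{i}<8\eps k\cdot 2^{j}\le 8\eps k\,w(e^*)$ spanning-cycle edges.

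The step you single out as delicate---ensuring $C$ actually contains a $B_j$-edge even when $\sigma^a_j$ and $\sigma^b_j$ might reorder the same edge set---is handled just as tersely in the paper, which simply asserts that the cycle ``contains at least one edge from $\sigma^a_j\cup\sigma^b_j$'' and moves on. Your instinct that it can be closed via non-backtracking plus distinctness is right. One clean way: by Claim~\ref{clm:bmsdistinct} each of $\pi^a,\pi^b$ uses every non-spanning-cycle edge at most once. If every $B_j$-edge in $G':=\pi^a[s\leadsto y]\cup\pi^b[s\leadsto y]$ were a bridge, then any such edge lying in (say) $\sigma^a_j$ is crossed exactly once by $\pi^a[s\leadsto y]$, hence separates $s$ from $y$ in $G'$, hence must also be crossed by $\pi^b[s\leadsto y]$; this forces the $B_j$-edge \emph{sets} to coincide, and then the linear order of bridges along the $s$--$y$ path in the bridge-tree forces both walks to traverse them in the same sequence, contradicting $\sigma^a_j\ne\sigma^b_j$. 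Hence some $B_j$-edge lies on a cycle of $G'$, and the rest of your argument goes through verbatim.
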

\begin{proof}
Seeking contradiction, let $\pi^a, \pi^b$ be two distinct bucket-monotone safe $k$-paths with the same endpoints $s, t$.
Let the decomposition of $\pi^a$ into bucket-safe paths be
$\pi^a = q^a_0 \circ q^a_1 \circ \dots$, and let $\sigma^a_i$ be the (possibly empty) sequence of non-spanning-cycle edges used in $q^a_i$.
We use similar notation for $\pi^b$.

Let $j$ be the last index on which $q^a_j \ne q^b_j$.
Note that $q^a_{j}, q^b_{j}$ share an endpoint node, which we will call $y$.
Then by Claim \ref{clm:bsmatching}, we have $\sigma^a_{j} \ne \sigma^b_{j}$.
Thus the $s \leadsto y$ prefixes of $\pi^a, \pi^b$ are distinct, and in particular there exists a cycle
$$C \subseteq \pi^a[s \leadsto y] \cup \pi^b[s\leadsto y]$$
where $C$ contains at least one edge from $\sigma^a_{j} \cup \sigma^b_{j}$.
Let $e^*$ be the heaviest edge in $C$.
Our goal is now to bound $w(C)$, and it will be helpful to separately count the contribution of the spanning cycle and non-spanning-cycle edges.

\paragraph{Non-Spanning-Cycle Edges:} Since $\pi^a, \pi^b$ are bucket-safe $k$-paths, together they contain at most $2k$ non-spanning-cycle edges, each of weight $\le w(e^*)$.
So these contribute at most $2k w(e^*)$ to $w(C)$.

\paragraph{Spanning Cycle Edges:} Since each subpath $q^a_i, q^b_i$ is safe for bucket $B_i$, it contains at most $\eps k 2^{i+1}$ spanning cycle edges.
So the total number of spanning cycle edges in $\pi^a[s\leadsto y] \cup \pi^b[s\leadsto y]$ is at most
$$2 \cdot \sum \limits_{i=1}^j \eps k 2^{i+1} < \eps k 2^{j+3}.$$
Finally, since $e^*$ is in bucket $B_j$, we have $w(e^*) \ge 2^j$.
Putting the parts together, we have
$$w^*(C) = \frac{w(C)}{w(e^*)} < \frac{2k w(e^*) + \eps k 2^{j+3}}{w(e^*)} \le 2k + 8 \eps k = (1+4\eps) \cdot 2k.$$
This contradicts that $H$ has weighted girth $> (1+4\eps) \cdot 2k$, which completes the proof.
\end{proof}

\subsection{New Hiker Paths and the Counting Lemma}

Much like the warmup proof, the medium and full counting lemmas generalize easily from the Moore bounds, but a new conceptual idea is needed for the weak counting lemma.
For intuition, and for fun, we will make up an extension of the previous hiker puzzle that captures the gist of how our new weak counting lemma extends the one from the warmup proof.

\begin{puzzle}[Another Hiker Puzzle]\label{puzzle:another}
Graph National Park has $n$ landmarks (nodes), $m$ trails (undirected edges), and each trail has a difficulty rating (weight).
It is possible for several trails to receive the same difficulty rating.
The park has also installed a shuttle system, with a fleet of shuttles that drive in both the \forward{} and \backward{} direction in a loop (spanning cycle) around the landmarks.
We have a Visitor's Pass that lets us ride the shuttle for up to $2t$ stops per day, which can be split across several trips if we like.

We are planning a multi-day backpacking trip to the park.
At any time we may camp overnight at a landmark.
We have the following constraints:
\begin{itemize}
\item To avoid boredom, we must hike at least one trail per day.
We may also never \emph{backtrack} a trail, meaning that after we hike a trail, our next action cannot be to immediately hike the same trail again in the reverse direction.
(But we can otherwise repeat trails, e.g., by riding the shuttle back to the start of the trail and hiking it again.)

\item Over the entire trip, we must hike all trails in nonincreasing order of difficulty.
Additionally, each time we camp we wake up sore, and so the trails we hike on the following day must be \emph{strictly} easier than all trails hiked on the previous day.
\end{itemize}

\noindent \textbf{Prove:} If we start at the right landmark, we can hike at least $2mt/n$ trails.

\begin{center}
\begin{tikzpicture}

\coordinate (A) at (0,0);
\coordinate (B) at (1,1);
\coordinate (B') at (3,1);
\coordinate (C) at (4,0);
\coordinate (C') at (7, 0);
\coordinate (D) at (8,2);
\coordinate (D') at (10, 2);
\coordinate (E) at (12,1);

\node [left] at (A) {\Strichmaxerl[3]};
\node [above] at (B) {\Summertree[3]};
\node [below=0.6, align=center] at (C) {\Huge \color{brown} \VarMountain};
\node [above] at (D) {\Huge \color{red} \VarFlag};
\node [right] at (E) {\Huge \color{green!50!black} \Tent};
\node [above] at (B') {\Huge \color{blue} \FilledHut};
\node [below=0.2, align=center] at (C') {\Bed \\ sleep};
\node [above, align=center] at (D') {\Bed \\ sleep};

\draw [thick] (A) -- node[above, sloped] {12} (B);
\draw [thick] (B') -- node[above, sloped] {12} (C);
\draw [thick] (C') -- node[above, sloped] {9} (D);
\draw [thick, ->] (D') -- node[above, sloped] {4} (E);

\draw [snake it] (B) -- node [below, align=center] {\faBus \\ $s_1$ stops} (B');
\draw [snake it] (C) -- node [below, align=center] {\faBus \\ $s_2$ stops} (C');
\draw [snake it] (D) -- node [below, align=center] {\faBus \\ $s_3$ stops} (D');

\fill (A) circle[radius=2pt];
\fill (B) circle[radius=2pt];
\fill (C) circle[radius=2pt];
\fill (D) circle[radius=2pt];
\fill (B') circle[radius=2pt];
\fill (C') circle[radius=2pt];
\fill (D') circle[radius=2pt];

\end{tikzpicture}

~

A valid hiker path, $4$ trails long and split over 3 days, assuming $s_1 + s_2 \le 2t$ and $s_3 \le 2t$.
\end{center}
\end{puzzle}

\vfill
\begin{center}
  (This space is intentionally left free to not spoil the solution to Puzzle~\ref{puzzle:another}.)
\end{center}
\vfill
\clearpage

\begin{solution}
As in our previous hiker puzzle, we assemble a squad of $n$ hikers, and we start with one hiker standing at each landmark.
We plan out hiker paths by the following process.
\begin{itemize}
\item First, we group the trails by difficulty rating.
Let $B_0$ be all the trails that are tied for highest difficulty rating, let $B_1$ be the trails that are tied for second-highest difficulty rating, etc.
We will plan hiker paths for each day $i$ using the following steps:

\begin{itemize}
\item \textbf{(Dawn)} At dawn, each hiker plans a path that consists of riding the shuttle $t$ stops in the \forward{} direction from their current position (but they do not yet walk their planned path).
Notice an \textbf{invariant}: for each node $v$, and for each integer $1 \le s \le t$, there exists a hiker who reaches $v$ as the endpoint of their $s^{th}$ shuttle stop.
As we modify paths in the following step, this invariant will remain.

\item \textbf{(Morning)} Next, the hikers collectively consider the trails $(u, v) \in B_i$ one at a time, in arbitrary order.
For each integer $1 \le s \le t$, consider the two hikers $h_u, h_v$ who currently plan to reach the nodes $u, v$ as their $s^{th}$ shuttle stop.
We insert $(u, v)$ into the paths of these two hikers right after their $s^{th}$ shuttle stops.
Following $(u, v)$, the two hikers swap paths: $h_u$ takes $h_v$'s previous path from $v$ onward, and $h_v$ takes $h_u$'s previous path from $u$ onward.

\item \textbf{(Afternoon)} Finally, in the afternoon, the hikers all travel the paths they have planned out.
Then they end the day by riding the shuttle $t$ steps \backward{} and camping at the landmark they reach.
\end{itemize}
\end{itemize}

\begin{center}
  \begin{minipage}{.3\textwidth}
    	\includegraphics[width=\textwidth]{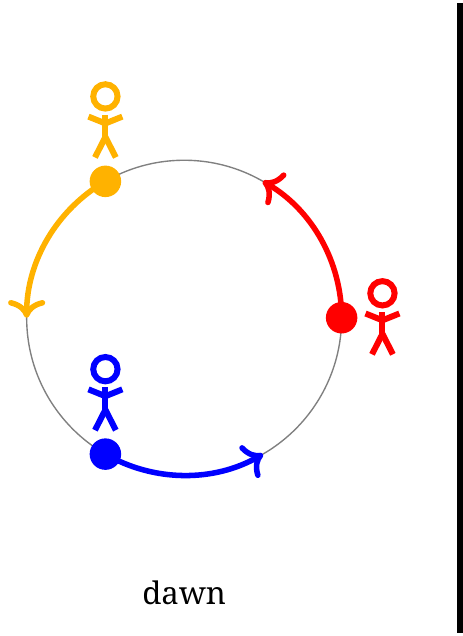}
\end{minipage}
\hfill
\begin{minipage}{.3\textwidth}
  	\includegraphics[width=\textwidth]{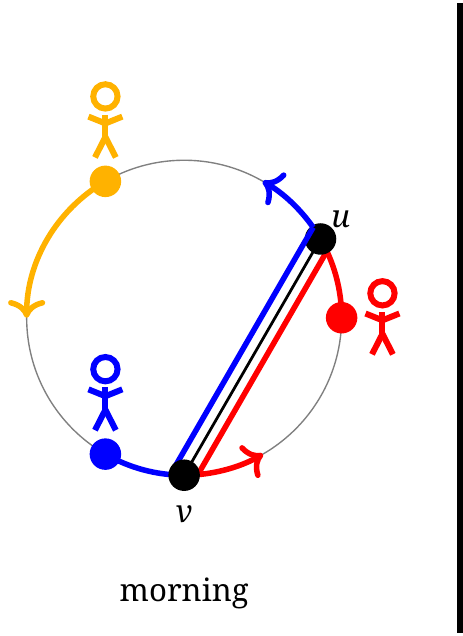}
\end{minipage}
\hfill
\begin{minipage}{.3\textwidth}
  	\includegraphics[width=\textwidth]{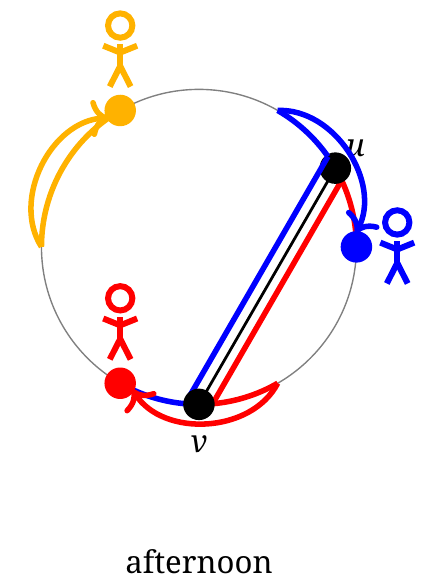}
\end{minipage}
\end{center}

Each trail $(u, v)$ is inserted into the paths of $2t$ hikers.
So our $n$ hikers hike $2mt$ trails in total, and so there is a hiker who hikes at least $2mt/n$ trails.
There is one last detail: we are supposed to ensure that this hiker hikes at least one trail per day.
Notice that, if a hiker does not hike any trails on day $i$, then their planned path for the day consists of riding the shuttle $t$ stops \forward{}, riding the shuttle $t$ stops \backward{}, and camping back at the same landmark where their day began.
The hiker can therefore skip this day entirely in their itinerary, and so they hike at least one trail in each un-skipped day.
\end{solution}

~\\We now give our weak counting lemma, which is more or less a repeat of the hiker path protocol from the above puzzle, with some adaptations and slightly more formal exposition of some steps.

\begin{lemma} [Weak Counting Lemma]
If $w(H \setminus \cee) \ge 4 \eps^{-1} n$, then $H$ contains a bucket-monotone extra-safe $k'$-path, for some $k' \ge k$.
\end{lemma}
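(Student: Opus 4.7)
The plan is to generalize the hiker protocol from Puzzle~\ref{puzzle:another}. I would place one hiker at each of the $n$ nodes of $H$, and process the weight buckets in order of increasing index ($B_0, B_1, \ldots$), so that concatenating each hiker's daily paths in this order yields a bucket-monotone structure. For each bucket $B_i$, set $t_i := \lfloor \eps k 2^{i-1} \rfloor$ and run a single day in three phases. At \textbf{dawn}, each hiker plans a walk of $t_i$ consecutive \forward{} shuttle steps from their current position; this establishes the invariant that at each shuttle-position $p$, the map from hikers to their planned position-$p$ node is a bijection on $V(H)$. During \textbf{morning}, process the edges of $B_i$ in arbitrary order: for each edge $(u,v) \in B_i$ and each $1 \le s \le t_i$, identify the unique hikers $h_u, h_v$ whose position-$s$ is $u, v$ (using the bijection invariant), insert the edge into both of their plans just after position $s$, and swap their post-position-$s$ path suffixes (the swap preserves the invariant). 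In the \textbf{afternoon}, each hiker walks the finished plan and then takes $t_i$ \backward{} shuttle steps.

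The main obstacle will be the non-backtracking requirement at the seam between the morning's plan and the afternoon's backward ride. If no edge happens to be inserted at position $t_i$ for a given hiker, then the plan's final action is a \forward{} shuttle that uses the same spanning-cycle edge as the afternoon's first \backward{} shuttle---a forbidden backtrack. I would fix this cleanly by iteratively truncating that hiker's day: strip the final action of the plan along with the first action of the afternoon, and repeat as long as the plan's current last action is a shuttle. Because this only removes trailing shuttles (ones that follow every edge inserted into the plan), the shuttles between inserted edges are untouched and the remaining walk is still a valid walk in $H$. Truncation halts either when the plan's last action is an edge (in which case the transition to the afternoon uses a distinct spanning-cycle edge and is non-backtracking) or when the plan becomes empty (which happens exactly when no edges were inserted for this hiker, in which case I declare $q_i$ to be the empty path). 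In either case the resulting $q_i$ has shuttle parameter $s \le t_i \le \eps k 2^{i-1}$, and so it is extra-safe for $B_i$.

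Concatenating a single hiker's $q_0 \circ q_1 \circ \cdots$ yields a bucket-monotone extra-safe path. For the count, each edge $(u,v) \in B_i$ is hiked by exactly $2 t_i$ hikers during bucket $B_i$'s morning (two hikers per value of $s \in \{1, \ldots, t_i\}$). Summing over buckets and using $|B_i| \cdot 2^i \ge w(B_i)/2$, the total number of edge-hikes across all $n$ hikers is at least
\[
\sum_i 2 t_i \, |B_i| \;\ge\; \sum_i \frac{\eps k \cdot w(B_i)}{2} \;=\; \frac{\eps k}{2} \cdot w(H \setminus \cee) \;\ge\; 2 k n,
\]
using the hypothesis $w(H\setminus\cee) \ge 4\eps^{-1} n$ in the final step. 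By pigeonhole, some hiker hikes at least $2k$ non-spanning-cycle edges in total, yielding a bucket-monotone extra-safe $k'$-path with $k' \ge 2k \ge k$.
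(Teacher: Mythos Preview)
Your proposal is correct and is essentially the paper's own argument: the paper runs the same dawn/morning/afternoon hiker protocol with $t_i = \lfloor \eps k\, 2^{i-1}\rfloor$, suffix swaps preserving the position-$s$ bijection (and the end-of-plan bijection, so hikers occupy distinct nodes at the start of the next day), and the identical cancellation of trailing \forward{} steps against the appended \backward{} steps to handle the seam. One cosmetic note: because of the floor in $t_i$, your displayed bound should read $\eps k\, w(B_i)/4$ rather than $/2$, which still yields $\ge kn$ total edge-hikes and hence a hiker with $k' \ge k$.
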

\begin{proof}
Start by placing a hiker at each node of $H$.
We will define a protocol to generate paths for these hikers, which loops on three phases; each loop is called a \textbf{day}.

\begin{itemize}
\item \textbf{(Dawn)} On day $i$ at dawn, each hiker plans a path that consists of walking $t_i :=\lfloor \eps k 2^{i-1} \rfloor$ steps \forward{} on the spanning cycle (but they do not walk these edges yet).
We notice that the following invariant holds:
\begin{center}
\textbf{Invariant}: for each node $v$, for each integer $1 \le s \le t_i$ there exists a hiker who plans to reach $v$ as the endpoint of the $s^{th}$ spanning cycle edge that they walk, and also there exists a hiker who plans to end their journey at $v$.
\end{center}

\item \textbf{(Morning)} On day $i$ in the morning, the hikers collectively consider the edges $(u, v) \in B_i$ one at a time, in arbitrary order.
For each integer $1 \le s \le t_i$, by the invariant, we may let $h_{u}, h_{v}$ be the two hikers who respectively reach $u, v$ at the end of their $s^{th}$ \forward{} spanning cycle steps.
We insert the edge $(u, v)$ into the planned paths of these two hikers immediately after their $s^{th}$ spanning cycle steps.
Following this edge, these hikers plan to swap paths; $h_{u}$ takes the path suffix previously planned by $h_{v}$ from $v$ onward, and $h_{v}$ takes the path suffix previously planned by $h_{u}$ from $u$ onward.

Note that each suffix swap preserves the invariant, since each $(s+1), \dots, t_i$ spanning cycle edge used by $h_u$ becomes the $(s+1), \dots, t_i$ spanning cycle edge used by $h_v$ (and vice versa), and the node at which $h_u$ plans to end their journey becomes the node at which $h_v$ plans to end their journey (and vice versa).

\item \textbf{(Afternoon)} Finally, for each hiker $h$, let $f_h$ be the number of contiguous \forward{} spanning cycle edges that form a suffix of their planned path.
We replace this suffix with $t_i-f_h$ \backward{} steps on the spanning cycle.
We can think of this step as adding a suffix of $t_i$ \backward{} steps on the spanning cycle to the end of each hiker's path, and then repeatedly canceling adjacent \forward{} and \backward{} steps that backtrack each other.

Each hiker then walks their planned path, and then ends the day; we then proceed to day $i+1$ and continue until all buckets have been considered.
\end{itemize}

For correctness of this hiker protocol, we note that by the invariant, it remains true that one hiker stands at each node at the end of the day and thus the invariant holds again in the following day at dawn.
This construction implies that the path hiked by each hiker on day $i$ is extra-safe for $B_i$.
Thus, each hiker's overall journey (concatenating their path across each day) forms a bucket-monotone extra-safe path.
The number of hikers who traverse an edge $(u, v) \in B_i$ is
$$2t = 2 \lfloor \eps k 2^{i-1} \rfloor \ge \frac{\eps k 2^{i+1}}{4} \ge \eps k \cdot \frac{w(u, v)}{4}.$$
Summing over all edges in $H \setminus \cee$, the total number of non-spanning-cycle edges traversed by our $n$ hikers is therefore at least
$$\sum \limits_{(u, v) \in E(H \setminus \cee)} \eps k \cdot \frac{w(u, v)}{4} = \eps k \cdot \frac{w(H \setminus \cee)}{4} \ge kn.$$
So there exists a hiker who hikes a bucket-monotone extra-safe $k'$-path, for some $k' \ge k$.
\end{proof}

The medium and full counting lemmas can now be bootstrapped from the weak counting lemma by the same method as in our previous proofs.

\begin{lemma} [Medium Counting Lemma]
$H$ contains at least $\Theta(\eps) \cdot \left( w(H \setminus \cee) - 4 \eps^{-1} n \right)$ bucket-monotone safe $k$-paths.
\end{lemma}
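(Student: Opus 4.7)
The plan is to mirror the iterative argument of the warmup Medium Counting Lemma (Lemma \ref{lem:warmupmed}). I would repeatedly apply the Weak Counting Lemma, harvest many bucket-monotone safe $k$-paths per invocation via the warmup's ``uniform extension'' trick, and delete one non-SC edge per iteration so that no other iteration re-records those paths. The process continues until $w(H \setminus \cee) < 4\eps^{-1}n$, at which point the Weak Counting Lemma no longer applies.

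In each iteration, the Weak Counting Lemma yields a bucket-monotone extra-safe $k'$-path $\pi = q_0 \circ \cdots \circ q_j$ with $k' \ge k$. I would trim $\pi$ to a bucket-monotone extra-safe $k$-path $\pi_0$ by removing the $k' - k$ lightest non-SC edges of $\pi$. Let $e^* \in B_{i^*}$ be the lightest non-SC edge surviving in $\pi_0$. Then, for each integer $s$ in a range of size $\Theta(\eps \cdot 2^{i^*}) = \Theta(\eps w(e^*))$, I record the safe $k$-path obtained from $\pi_0$ by prepending $s$ extra \forward{} SC edges and appending $s$ extra \backward{} SC edges to every bucket-safe subpath $q_i$ of $\pi_0$ with $i \ge i^*$. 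Each modified subpath stays within its bucket-safety budget because $q_i$ was extra-safe (using at most $\eps k 2^{i-1}$ SC edges in each direction), and the concatenation remains valid because every subpath's endpoints shift by the same amount along the spanning cycle, so the seams align. Different values of $s$ yield distinct safe $k$-paths. Finally, delete $e^*$ from $H$.

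Summing across iterations, the cumulative weight deleted is at least $w(H \setminus \cee) - 4\eps^{-1}n$ (since we halt only when the Weak Counting Lemma fails), and each iteration records $\Theta(\eps w(e^*))$ safe $k$-paths. Paths across iterations are distinct because each batch's paths all use that iteration's deleted edge $e^*$. The total count is therefore $\Theta(\eps) \cdot \sum w(e^*) \ge \Theta(\eps)(w(H \setminus \cee) - 4\eps^{-1}n)$, as required.

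The main obstacle I expect is cleanly executing the trimming step: shrinking a bucket-safe subpath to use fewer non-SC edges without breaking its walk structure or its balanced forward/backward SC-edge count. The intended fix is to delete non-SC edges one at a time, each time locally re-routing across the spanning cycle segment between the removed edge's endpoints, so that the resulting subpath remains a valid bucket-safe walk for its bucket; any extra SC edges introduced by the re-routing can be absorbed by the extra-safety slack.
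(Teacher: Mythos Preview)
Your iterative framework is correct and matches the paper's proof. The gap is exactly where you flagged it: the trimming step. Unfortunately, your proposed re-routing fix does not work.

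Here is why the re-routing fails. If you delete a non-SC edge $(u,v) \in B_i$ and try to bridge $u$ to $v$ along the spanning cycle, the weighted girth of $H$ forces that bridge to be long: the cycle formed by $(u,v)$ together with the shorter $u \leadsto v$ spanning-cycle path $P$ has normalized weight $1 + |P|/w(u,v)$, which must exceed $(1+4\eps)\cdot 2k$, so $|P| > (2k-1)\cdot w(u,v) \ge (2k-1)\cdot 2^i$. But the extra-safety slack available in bucket $B_i$ is only $\eps k\, 2^{i-1}$ additional SC edges, which is far smaller for any $\eps < 2$. So even a single re-route blows the budget. There is also a structural problem: a bucket-safe path must have all its \forward{} SC edges precede all its \backward{} SC edges, and a mid-path SC detour will typically violate this ordering. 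Finally, removing the $k'-k$ \emph{lightest} (by weight) non-SC edges is not the same as removing a prefix of the path: within a single bucket the lightest edges need not appear first, so you are genuinely forced into interior deletions.

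The paper sidesteps all of this by never removing edges from the interior. It simply \emph{truncates} $\pi$ immediately after its $k$-th non-SC edge $e_k$, and then pads the tail with enough \backward{} SC edges to re-balance the final subpath $q_j$. Nothing is deleted from the middle of any $q_i$, so no re-routing is ever needed. The edge deleted from $H$ is then $e_1$, the \emph{first} non-SC edge of the truncated path (lying in the lowest nonempty bucket $B_i$), and the extension range $0 \le s \le \lfloor \eps k\, 2^{i-1}\rfloor$ yields $\Theta(\eps k\, w(e_1)) \ge \Theta(\eps\, w(e_1))$ recorded paths per round. Replacing your ``remove the lightest $k'-k$ edges'' with ``truncate after the $k$-th non-SC edge'' fixes the argument with no other changes to your outline.
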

\begin{proof}
Repeat the following process until no longer possible.
Find a bucket-monotone extra-safe $k'$-path $\pi$, for some $k' \ge k$.
We can use $\pi$ to record a family of bucket-monotone (not-necessarily-extra-)safe $k$-paths by the following process:
\begin{itemize}
\item Let $e_1, \dots, e_k$ be the first $k$ non-spanning-cycle edges used by $\pi$.
Truncate $\pi$ immediately after $e_k$.
After this truncation, the last path $q_j$ in the decomposition might not end with a suffix of the appropriate number of \backward{} edges, and so we add the appropriate number of \backward{} edges as a suffix to $\pi$ to restore the fact that $q_j$ is extra-safe for $B_j$.

\item Let $B_i$ be the bucket that contains $e_1$, and let $B_j$ be the bucket that contains $e_k$.
Thus, omitting empty paths at the beginning and end, we may write the decomposition of $\pi$ into bucket-extra-safe paths as $\pi = q_i \circ \dots \circ q_j$, where each subpath $q_x$ is extra-safe for bucket $B_x$.

\item For each integer $0 \le s \le \lfloor \eps k 2^{i-1} \rfloor$, notice that we can generate a bucket-monotone safe $k$-path by modifying $\pi$ by adding a prefix of $s$ additional \forward{} spanning cycle edges, and a suffix of $s$ additional \backward{} spanning cycle edges, to each nonempty path $q_x$ in the decomposition of $\pi$.
\end{itemize}

We record $\lfloor \eps k 2^{i-1} \rfloor = \Theta(\eps k w(e_1))$ bucket-monotone safe $k$-paths in this way.
We then delete $e_1$ to ensure that we do not re-record these paths in a future round.
By the weak counting lemma, we can repeat this process at least until $w(H \setminus \cee) < 4 \eps^{-1} n$.
So we record at least $\Theta(\eps) \cdot (w(H \setminus \cee) - 4 \eps^{-1} n)$ paths in total.
\end{proof}

In the full counting lemma for our warmup proof, we used the fact that monotone safe $k$-paths necessarily contain $k$ \emph{distinct} non-spanning cycle edges, by monotonicity.
For our new full counting lemma we will need the analogous fact is also true for bucket-monotone safe $k$-paths, as established in the following claim:
\begin{claim} \label{clm:bmsdistinct}
Every bucket-monotone safe $k$-path $\pi$ uses $k$ distinct non-spanning-cycle edges.
\end{claim}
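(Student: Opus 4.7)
The plan is to derive a contradiction from any repetition of a non-spanning-cycle edge in $\pi$, by exhibiting a cycle of too-small normalized weight in $H$. Suppose for contradiction that some non-spanning-cycle edge $e = (u,v)$ appears twice among the $k$ non-spanning-cycle edges of $\pi$. If its two occurrences lie in distinct subpaths $q_i$ and $q_{i'}$ of the decomposition $\pi = q_0 \circ \dots \circ q_j$, we are done immediately: the edges of $q_i$ lie in $B_i \subseteq [2^i, 2^{i+1})$ and those of $q_{i'}$ lie in $B_{i'} \subseteq [2^{i'}, 2^{i'+1})$, and these weight ranges are disjoint. So both occurrences of $e$ must live in the same subpath $q_i$, and this is the case I would concentrate on.

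Let $p$ denote the subwalk of $q_i$ strictly between the two occurrences of $e$; by assumption $p$ itself contains no occurrence of $e$. Depending on whether the two traversals of $e$ are in the same or opposite directions, either $p \cup \{e\}$ or $p$ alone forms a closed walk in $H$, and in either case this closed walk contains a simple cycle $C$. I would then bound $w^*(C)$ using only the content of $q_i$: it contains at most $k$ non-spanning-cycle edges (each in $B_i$, weight $< 2^{i+1}$) and at most $2\eps k 2^i$ spanning cycle edges (each of weight $1$), giving $w(C) \le k \cdot 2^{i+1} + 2\eps k \cdot 2^i = 2k(1+\eps) \cdot 2^i$. For the heaviest-edge denominator, if $C$ contains any non-spanning-cycle edge then that edge lies in $B_i$ and has weight $\ge 2^i$; otherwise $C$ would be a simple cycle consisting entirely of spanning cycle edges, hence would coincide with all of $\cee$ and have length $n$. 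But the length of our closed walk is at most $|q_i| \le k + 2\eps k 2^i$, which Lemma \ref{lem:maxwt} forces to be strictly less than $n$, ruling this out.

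Combining, $w^*(C) \le 2k(1+\eps) < (1+4\eps) \cdot 2k$ for $\eps > 0$, contradicting that $H$ has weighted girth greater than $(1+4\eps) \cdot 2k$. The main obstacle I anticipate is the case analysis on the direction of the two traversals of $e$, together with verifying that the extracted cycle $C$ cannot lie entirely inside the spanning cycle $\cee$; both parts are handled by routine length accounting against Lemma \ref{lem:maxwt}, and the rest of the computation is just the same ``spanning-cycle edges are negligible compared to $B_i$ edges at scale $2^i$'' calculation that drives the Dispersion Lemma~\ref{lem:fulldisp}.
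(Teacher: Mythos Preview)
Your proof is correct and follows essentially the same route as the paper's: both observe that a repeated non-spanning-cycle edge must occur twice within a single bucket-safe subpath $q_i$, extract a cycle $C$ from the resulting closed walk, and bound $w^*(C)$ by tallying at most $k$ non-spanning-cycle edges of weight $O(2^i)$ plus $O(\eps k 2^i)$ spanning-cycle edges. You are in fact slightly more careful than the paper, which silently assumes $C$ contains a non-spanning-cycle edge; your use of Lemma~\ref{lem:maxwt} to rule out $C = \cee$ fills that small gap (and note that only the spanning-cycle-edge count $2\eps k 2^i < n/2$ is needed there, so the extra $+k$ term in your length bound is harmless).
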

\begin{proof}
Suppose for contradiction that there is a repeated non-spanning-cycle edge $(u, v)$ in $\pi$.
Let $B_i$ be the bucket that contains $(u, v)$, and let $q_i$ be the subpath of $\pi$ that is safe for $B_i$.
Notice that $q_i$ must contain both copies of $(u, v)$, and so it has a cycle $C$ as a subpath between these copies.
Let $e^*$ be the heaviest edge in $C$.
We may bound the cycle weight $w(C)$ as follows.
First, there are at most $k$ non-spanning-cycle edges in $C$, and so the total contribution of these edges to $w(C)$ is at most
$$\sum \limits_{(u, v) \text{ non-sp-cyc edge in } C} w(u, v) \le k \cdot w(e^*).$$
Meanwhile, the number of spanning cycle edges in $C$ is at most $\eps k 2^i$.
Putting these together, we have
$$w^*(C) \le \frac{w(C)}{w(e^*)} \le \frac{k \cdot w(e^*) + \eps k 2^i}{w(e^*)} \le k + 2 \eps k = (1 + 2\eps) k.$$
This contradicts the weighted girth of $H$.
\end{proof}

\begin{lemma} [Full Counting Lemma]
Let $d := w(H \setminus \cee) / n$.
If $d \ge 5 \eps^{-1}$, then $H$ contains at least $n \cdot \Omega(\eps d)^{k}$ bucket-monotone safe $k$-paths.
\end{lemma}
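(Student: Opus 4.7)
The plan is to mirror the structure of the warmup Full Counting Lemma (Lemma \ref{lem:warmupfc}) via random edge sparsification: take a random subgraph $H' \subseteq H$, then bound its expected number of bucket-monotone safe $k$-paths from above using the survival probability of any fixed such path, and from below using the Medium Counting Lemma. Concretely, I will form $H'$ by keeping the spanning cycle $\cee$ deterministically and including each edge of $H \setminus \cee$ independently with probability $q := 5 \eps^{-1}/d$; the hypothesis $d \ge 5 \eps^{-1}$ ensures $q \le 1$. Since $H'$ still has the unit-weight spanning cycle $\cee$ and inherits weighted girth $>(1+4\eps)\cdot 2k$, the Medium Counting Lemma applies to $H'$. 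Let $p$ and $p'$ denote the number of bucket-monotone safe $k$-paths in $H$ and $H'$, respectively.

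For the upper bound on $\mathbb{E}[p']$, I invoke Claim \ref{clm:bmsdistinct}: every bucket-monotone safe $k$-path uses exactly $k$ \emph{distinct} non-spanning-cycle edges. Each such path survives in $H'$ iff all $k$ of these edges are kept (spanning cycle edges are always kept), which occurs with probability $q^k$. Summing over all paths gives $\mathbb{E}[p'] = p \cdot q^k$. For the lower bound, the Medium Counting Lemma applied pointwise to each realization of $H'$ (which is trivially true whenever the right-hand side is negative, since the count is nonnegative) and then averaged yields
\begin{align*}
\mathbb{E}[p'] &\ge \Theta(\eps) \cdot \left( \mathbb{E}[w(H' \setminus \cee)] - 4 \eps^{-1} n \right) \\
&= \Theta(\eps) \cdot \left( n d q - 4 \eps^{-1} n \right) \\
&= \Theta(\eps) \cdot \eps^{-1} n \;=\; \Theta(n),
\end{align*}
using $d q = 5 \eps^{-1}$ by the choice of $q$.

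Combining the two estimates gives $p \cdot q^k \ge \Omega(n)$, and rearranging, $p \ge \Omega(n) \cdot q^{-k} = n \cdot \Omega(\eps d)^k$, as required. I do not anticipate a serious obstacle here: the real work has already been paid for in Claim \ref{clm:bmsdistinct} (which gives the key multiplicative survival probability $q^k$) and in the Medium Counting Lemma (which converts expected weight into an expected count of paths). The only subtlety to get right is calibrating $q$ so that $d q$ safely exceeds the threshold $4 \eps^{-1}$ appearing in the Medium Counting Lemma, ensuring that the resulting lower bound on $\mathbb{E}[p']$ is $\Theta(n)$ rather than being swamped by the additive error term.
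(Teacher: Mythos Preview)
Your proof is correct and essentially identical to the paper's own argument: random sparsification keeping $\cee$ and each non-spanning-cycle edge with probability $5\eps^{-1}/d$, then the upper bound on $\mathbb{E}[p']$ via Claim~\ref{clm:bmsdistinct} and the lower bound via the Medium Counting Lemma. Your explicit remark that the Medium Counting Lemma holds trivially when its right-hand side is negative (justifying the passage to expectations) is a nice point the paper leaves implicit.
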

\begin{proof}
Let $H'$ be a random edge-subgraph of $H$ obtained by keeping the spanning cycle $\cee$ deterministically, and keeping each non-spanning-cycle edge independently with probability $5\eps^{-1} / d$.
Let $p, p'$ be the number of bucket-monotone safe $k$-paths in $H, H'$ respectively.

On one hand, by Claim \ref{clm:bmsdistinct}, every such path $\pi$ uses $k$ \emph{distinct} non-spanning-cycle edges.
Thus, the probability that $\pi$ survives in $H'$ is $\Theta(\eps^{-1} / d)^{k}$, and so
$$\mathbb{E}[p'] = p \cdot \Theta\left( \frac{\eps^{-1}}{d} \right)^{k}.$$
On the other hand, we have

\begin{align*}
\mathbb{E}[p'] &\ge \mathbb{E}\left[\Theta(\eps) \cdot \left( w(H' \setminus \cee) - 4\eps^{-1} n \right)\right] \tag*{Medium Counting Lemma}\\
&= \Theta(\eps) \cdot \left(\mathbb{E}\left[ w(H' \setminus \cee)\right] - 4\eps^{-1} n\right)\\
&= \Theta(\eps) \cdot \left( w(H \setminus \cee) \cdot \frac{5 \eps^{-1}}{d} - 4\eps^{-1}n\right)\\
&= \Theta(\eps) \cdot \left( 5 \eps^{-1} n - 4\eps^{-1} n\right)\\
&= \Theta(\eps) \cdot \left( \eps^{-1} n\right)\\
&= \Theta(n).
\end{align*}
Comparing the two previous bounds on $\mathbb{E}[p']$, we get
$$\Theta(n) \le \mathbb{E}[p'] \le p \cdot \Theta\left(\frac{\eps^{-1}}{d}\right)^{k}.$$
Rearranging now gives our desired inequality of
$p \ge n \cdot \Theta\left( \eps d \right)^k$.
\end{proof}

We now complete the proof of Theorem \ref{thm:main} in the usual way.
Let $d := w(H \setminus \cee)/n$.
If $d < 5 \eps^{-1}$, then we have $w(H) = O(\eps^{-1} n)$ and we are done.
Otherwise, by the dispersion and full counting lemmas, the number of bucket-monotone safe $k$-paths in $H$ is at least $n \cdot \Omega(\eps d)^k$, and at most $O(n^2)$.
So we have
\begin{align*}
n \cdot \Omega(\eps d)^k &\le O(n^2)\\
d &\le O\left(\eps^{-1} n^{1/k}\right)
\end{align*}
and thus $w(H) = w(H \setminus \cee) + w(\cee) = nd + n = O\left(\eps^{-1} n^{1+1/k}\right)$.

\section*{Acknowledgments}

I am grateful to Michael Elkin for helpful technical comments on an earlier draft of this paper.

\printbibliography

\end{document}